\newtheorem{theorem}{Theorem}
\newtheorem{corollary}{Corollary}
\theoremstyle{definition}
\newtheorem{remark}{Remark}
\newcommand{\minimize}{\mathop{\mathrm{minimize}}}
\newcommand{\st}{\mathop{\mathrm{subject\,\,to}}}
\def\R{\mathbb{R}}
\def\E{\mathbb{E}}
\def\tr{\mathrm{tr}}
\def\hx{\hat{x}}
\def\bx{\bar{x}}
\def\bP{\bar{P}}
\def\hb{\hat{b}}
\def\hB{\hat{B}}
\def\hR{\hat{R}}
\newcommand{\sfrac}[2]{\,^#1\!/_#2}
\title{Kalman Filter, Sensor Fusion, and Constrained Regression: Equivalences
  and Insights} 
\author{
	Maria Jahja \\
	Department of Statistics \\
	Carnegie Mellon University \\
	Pittsburgh, PA 15213 \\
	{\tt maria@stat.cmu.edu} \\
	\And
	David Farrow \\
	Computational Biology Department\\
	Carnegie Mellon University \\
	Pittsburgh, PA 15213 \\
	{\tt dfarrow0@gmail.com} \\
	\AND
	Roni Rosenfeld \\
	Machine Learning Department \\
	Carnegie Mellon University \\
	Pittsburgh, PA 15213 \\
	{\tt roni@cs.cmu.edu} \\
	\And
	Ryan J. Tibshirani \\
	Department of Statistics \\
	Machine Learning Department \\
	Carnegie Mellon University \\
	Pittsburgh, PA 15213 \\
	{\tt ryantibs@stat.cmu.edu} 
}
\begin{document}
	\maketitle
	
	\begin{abstract}
		The Kalman filter (KF) is one of the most widely used tools for data
		assimilation and sequential estimation. In this work, we show that the state
		estimates from the KF in a standard linear dynamical system setting are
		equivalent to those given by the KF in a transformed system, with infinite
		process noise (i.e., a ``flat prior'') and an augmented measurement space. This
		reformulation---which we refer to as augmented measurement sensor fusion
		(SF)---is conceptually interesting, because the transformed system here is
		seemingly static (as there is effectively no process model), but we can still
		capture the state dynamics inherent to the KF by folding the process model into
		the measurement space.  Further, this reformulation of the KF turns out to be
		useful in settings in which past states are observed eventually (at some lag).
		Here, when the measurement noise covariance is estimated by the empirical
		covariance, we show that the state predictions from SF are equivalent to those
		from a regression of past states on past measurements, subject to particular
		linear constraints (reflecting the relationships encoded in the measurement
		map). This allows us to port standard ideas (say, regularization methods) in
		regression over to dynamical systems. For example, we can posit multiple
		candidate process models, fold all of them into the measurement model, transform
		to the regression perspective, and apply $\ell_1$ penalization to perform
		process model selection. We give various empirical demonstrations, and focus on
		an application to nowcasting the weekly incidence of influenza in the US.
	\end{abstract}
	
	\section{Introduction}
	\label{sec:intro}
	
	Let $x_t \in \R^k$, $t=1,2,3,\ldots$ denote states and $z_t \in \R^d$,
	$t=1,2,3,\ldots$ denote measurements evolving according to the time-invariant
	linear dynamical system:
	\begin{align}
		\label{eq:process} 
		x_t &= F x_{t-1} + \delta_t, \\
		\label{eq:measure} 
		z_t &= H x_t + \epsilon_t,
	\end{align} 
	for $t=1,2,3,\ldots$.  We assume the noise terms $\delta_t,\epsilon_t$ have mean
	zero and covariances $Q \in \R^{k \times k}$ and $R \in \R^{d \times d}$,
	respectively, for all $t=1,2,3,\ldots$.  Also, we assume that the initial state
	$x_0$ and all noise terms are mutually independent. We call \eqref{eq:process}
	the process model and \eqref{eq:measure} the measurement model.
	
	\paragraph{Kalman filter.} The Kalman filter (KF) \citep{kalman1960new} is a    
	method for sequential estimation in the model \eqref{eq:process},
	\eqref{eq:measure}. Given past estimates \smash{$\hx_1,\ldots,\hx_t$} and
	measurements $z_1,\ldots,z_{t+1}$, we form an estimate \smash{$\hx_{t+1}$} of
	the state $x_{t+1}$ via 
	\begin{align}
		\label{eq:kf_predict}
		\bx_{t+1} &= F \hx_t, \\
		\label{eq:kf_update}
		\hx_{t+1} &= \bx_{t+1} + K_{t+1} (z_{t+1} - H \bx_{t+1}),
	\end{align}
	where $K_{t+1} \in \R^{k \times d}$ is called the {\it Kalman gain} (at
	time $t+1$).  It is itself updated sequentially, via
	\begin{align}
		\label{eq:p_predict}
		\bP_{t+1} &= F P_t F^T + Q, \\
		\label{eq:k_update}
		K_{t+1} &= \bP_{t+1} H^T (H \bP_{t+1} H^T + R)^{-1}, \\
		\label{eq:p_update}
		P_{t+1} &= (I-K_{t+1} H) \bP_{t+1}.
	\end{align}
	where $P_{t+1} \in \R^{k \times k}$ denotes the state error covariance (at time
	$t+1$).  The step \eqref{eq:kf_predict} is often called the {\it predict} step:
	we form an intermediate estimate \smash{$\bx_{t+1}$} of the state based on
	the process model and our estimate at the previous time point.  The
	step \eqref{eq:kf_update} is often called the {\it update} step: we update our
	estimate \smash{$\hx_{t+1}$} based on the measurement model and the
	measurement $z_{t+1}$.
	
	Under the data model \eqref{eq:process}, \eqref{eq:measure} and the conditions
	on the noise stated above, the Kalman filter attains the optimal mean squared
	error $\E \|\hx_t - x_t\|_2^2$ among all linear unbiased filters, at each
	$t=1,2,3,\ldots$. When the initial state $x_0$ and all noise terms are Gaussian,
	the Kalman filter estimates exactly reduce to the Bayes estimates
	\smash{$\hx_t=\E(x_t|z_1,\ldots,z_t)$}, $t=1,2,3,\ldots$. Numerous important
	extensions have been proposed, e.g., the ensemble Kalman filter (EnKF)
	\citep{evensen1994sequential,houtekamer1998data}, which approximates the noise
	process covariance $Q$ by a sample covariance in an ensemble of state
	predictions, as well as the extended Kalman filter (EKF)
	\citep{smith1962application} and unscented Kalman filter (UKF)
	\citep{julier1997new}, which both allow for nonlinearities in the process
	model. Particle filtering (PF) \citep{gordon1993novel} has more recently become
	a popular approach for modeling complex dynamics. PF adaptively approximates 
	the posterior distribution, and in doing so, avoids the linear and Gaussian
	assumptions inherent to the KF. This flexibility comes at the cost of a greater
	computational burden.
	
	In this paper, we revisit the standard KF \eqref{eq:kf_predict},
	\eqref{eq:kf_update} and show that its estimates $\hx_{t+1}$, $t=0,1,2,\ldots$  
	are equivalent to those from the KF applied to a transformed system, with 
	infinite process noise and an augmented measurement space.  At first glance,
	this is perhaps surprising, because the transformed system effectively lacks a
	process model and is therefore seemingly static; however, it is able to take the state 
	dynamics into account as part of its measurement model.  Importantly, this
	reformulation of the KF leads us to derive a second, key reformulation for
	problems in which past states are observed (at some lag).  This second
	reformulation is the methodological crux of our paper: it is a constrained
	regression approach for predicting states from measurements, motivated by 
	(derived from) SF and the KF.  We illustrate its effectiveness in an application
	to nowcasting weekly influenza levels in the US. 
	
	\paragraph{Sensor fusion.} If we let the noise covariance in the process model   
	diverge to infinity, $Q \to \infty$\footnote{To make this unambiguous, we may
		take, say, $Q = a I$ and let $a \to \infty$.}, 
	then the Kalman filter estimate in \eqref{eq:kf_predict}, \eqref{eq:kf_update} 
	simplifies to  
	\begin{equation}
		\label{eq:sf}
		\hx_{t+1} = (H^T R^{-1} H)^{-1} H^T R^{-1} z_{t+1}.
	\end{equation}
	This can be verified by rewriting the Kalman gain as 
	\smash{$K_{t+1}=(\bP_{t+1}^{-1} + H^T R^{-1} H)^{-1} H^T R^{-1}$}, and observing
	that \smash{$\bP_{t+1}^{-1} \to 0$} as $Q \to \infty$.
	Alternatively, we can verify this by specializing to the case of Gaussian noise:
	as $\tr(Q) \to \infty$, we approach a flat prior, and the Kalman filter (Bayes 
	estimator) just maximizes the likelihood of $z_{t+1}|x_{t+1}$.  From the
	measurement model \eqref{eq:measure} (assuming Gaussian noise), this is a 
	weighted regression of $z_{t+1}$ on the measurement map $H$, precisely as in 
	\eqref{eq:sf}.
	
	We will call \eqref{eq:sf} the {\it sensor fusion} (SF) estimate (at time
	$t+1$).\footnote{ ``Sensor fusion'' is typically used as a generic term, 
		similar to ``data assimilation''; we use it to specifically describe the
		estimate in \eqref{eq:sf} to distinguish it from the KF.  This is useful when
		we describe equivalences, shortly.}   
	In this setting, we will also refer to the measurements as {\it sensors}. As
	defined, sensor fusion is a special case of the Kalman filter when there is
	infinite process noise; said differently, it is a special case of the Kalman
	filter when there is no process model at all.  Thus, looking at \eqref{eq:sf},
	the state dynamics have apparently been completely lost. Perhaps surprisingly, 
	as we will show shortly, these dynamics can be exactly recovered by augmenting
	the measurement vector $z_{t+1}$ with the KF intermediate prediction 
	\smash{$\bx_{t+1}=F \hx_t$} in \eqref{eq:kf_predict} (and adjusting the map $H$
	and covariance $R$ appropriately).  We summarize this and our other
	contributions next. 
	
	\paragraph{Summary of contributions.} An outline of our contributions in this 
	paper is as follows. 
	
	\begin{enumerate}
		\item We show in Section \ref{sec:kf_sf} that, if we take the KF intermediate
		prediction \smash{$\bx_{t+1}$} in \eqref{eq:kf_predict}, append it to the
		measurement vector $z_{t+1}$, and perform SF \eqref{eq:sf} (with an
		appropriately adjusted $H,R$), then the result is exactly the KF estimate
		\eqref{eq:kf_update}.   
		
		\item We show in Section \ref{sec:sf_reg} that, if we are in a problem setting
		in which past states are observed (at some lag, which is the case in the flu
		nowcasting application), and we replace the noise covariance $R$ from the
		measurement model by the empirical covariance on past data, then the sensor
		fusion estimate \eqref{eq:sf} can be written as \smash{$\hB^T z_{t+1}$}, where 
		\smash{$\hB \in \R^{d \times k}$} is a matrix of coefficients that solves
		a regression problem of the states on the measurements (using past data), 
		subject to the equality constraint \smash{$H^T \hB = I$}.  
		
		\item We demonstrate the effectiveness of our new regression formulation of SF
		in Section \ref{sec:flu_demo} by describing an application of this methodology
		to nowcasting the incidence of weekly flu in the US.  This achieves
		state-of-the art performance in this problem.    
		
		\item We present in Section \ref{sec:end} some extensions of the regression
		formulation of SF; they do not have direct equivalences to SF (or the KF), 
		but are intuitive and extend dynamical systems modeling in new directions
		(e.g., using $\ell_1$ penalization to perform a kind of process model
		selection).  
	\end{enumerate}
	
	We make several remarks.  The equivalences described in points 1--3 above are  
	deterministic (they do not require the modeling assumptions \eqref{eq:process},
	\eqref{eq:measure}, or any modeling assumptions whatsoever).  Further, even
	though their proofs are elementary (they are purely linear algebraic) and the
	setting is a classical one (linear dynamical systems), these equivalences
	are---as far as we can tell---new results.  They deserve to be widely known and 
	may have implications beyond what is explored in this paper.   
	
	For example, the regression formulation of SF may still be a useful perspective
	for problems in which past states are fully unobserved (this being the case in
	most KF applications). In such problems, we may consider using {\it smoothed} 
	estimates of past states, obtained by running a backward version of the KF
	forward recursions \eqref{eq:kf_predict}--\eqref{eq:p_update} (see, e.g.,
	Chapter 7 of \citet{anderson1979optimal}), for the purposes of the regression
	formulation. As another example, the SF view of the KF may be a useful
	formulation for the purposes of estimating the covariances $R,Q$, or the maps
	$F,H$, or all of them; in this paper, we assume that $F,H,R,Q$ are known (except
	for in the regression formulation of SF, in which $R$ is unknown but past
	states are available); in general, there are well-developed methods for
	estimating $F,H,R,Q$ such as {\it subspace identification} algorithms (see,
	e.g., \citet{vanovershee1996subspace}), and it may be interesting to see if the
	SF perspective offers any advantages here.  
	
	\paragraph{Related work.} The Kalman filter and its extensions, as previously 
	referenced (EnKF, EKF, UKF), are the de facto standard in state    
	estimation and tracking problems; the literature surrounding them is
	enormous and we cannot give a thorough treatment.  Various authors have pointed
	out the simple fact that maximum likelihood estimate in \eqref{eq:sf}, which
	we call sensor fusion, is the limit of the KF as the noise covariance in the
	process model approaches infinity (see, e.g., Chapter 5.9 of
	\citet{brown2012introduction}).  We have not, however, seen any authors
	note that this static model can recover the KF by augmenting the measurement
	vector with the KF intermediate prediction (Theorem \ref{thm:kf_sf}).  
	
	Along the lines of our second equivalence (Theorem \ref{thm:sf_reg}), there is
	older work in the statistical calibration literature that studies the
	relationships between the regressions of $y$ on $x$ and $x$ on $y$ (for
	multivariate $x,y$, see \citet{brown1982multivariate}).  This is somewhat
	related to our result, since we show that a {\it backwards} or {\it indirect} 
	approach, which models $z_{t+1}|x_{t+1}$, is actually equivalent to a {\it
		forwards} or {\it direct} approach, which predicts $x_{t+1}$ from $z_{t+1}$
	via regression.  However, the details are quite different.
	
	Finally, our SF methodology in the flu nowcasting application blends together
	individual predictors in a way that resembles {\it linear stacking} 
	\citep{wolpert1992stacked,breiman1996stacked}.  In fact, one implication of our
	choice of measurement map $H$ in the flu nowcasting problem, as well as the
	constraints in our regression formulation of SF, is that all regression weights
	must sum to 1, which is the standard in linear stacking as well.  However, the 
	equality constraints in our regression formulation are quite a bit more complex,
	and reflect aspects of the sensor hierarchy that linear stacking would not.        
	
	\section{Equivalence between KF and SF}
	\label{sec:kf_sf}
	
	As already discussed, the sensor fusion estimate \eqref{eq:sf} is a limiting 
	case of the Kalman filter \eqref{eq:kf_predict}, \eqref{eq:kf_update}, and  
	initially, it seems, one rather limited in scope: there is effectively no
	process model (as we have sent the process variance to infinity). However, 
	as we show next, the KF is actually itself a special case of SF, when we augment
	the measurement vector by the KF intermediate predictions, and appropriately
	adjust the measurement map $H$ and noise covariance $R$.  The proof is
	elementary, a consequence of the Woodbury matrix and related manipulations. It
	is given in the supplement.   
	
	\begin{theorem}
		\label{thm:kf_sf}
		At each time $t=0,1,2,\ldots$, suppose we augment our measurement vector by
		defining \smash{$\tilde{z}_{t+1} = (z_{t+1}, \bx_{t+1}) \in \R^{d+k}$}, where 
		\smash{$\bx_{t+1}=F \hx_t$} is the KF intermediate prediction at time
		$t+1$. Suppose that we also augment our measurement map by defining
		\smash{$\tilde{H} \in \R^{(d+k) \times k}$} to be the rowwise concatention of
		$H$ and the identity matrix $I \in \R^{k \times k}$.  Furthermore, suppose we
		define an augmented measurement noise covariance   
		\begin{equation}
			\label{eq:block_cov}
			\tilde{R}_{t+1} = \begin{bmatrix} R & 0 \\ 0 & \bP_{t+1} \end{bmatrix}, 
		\end{equation}
		where \smash{$\bP_{t+1}$} is the KF intermediate error covariance at time $t+1$
		(as in \eqref{eq:p_predict}). Then applying SF to the augmented system produces
		an estimate at $t+1$ that equals the KF estimate,
		\begin{equation}
			\label{eq:kf_sf}
			(\tilde{H}^T \tilde{R}_{t+1}^{-1} \tilde{H})^{-1} \tilde{H}^T
			\tilde{R}_{t+1}^{-1} \tilde{z}_{t+1} = 
			\bx_{t+1} + K_{t+1} (z_{t+1} - H \bx_{t+1}), 
		\end{equation}
		where $K_{t+1}$ is the Kalman gain at $t+1$ (as in \eqref{eq:k_update}).
	\end{theorem}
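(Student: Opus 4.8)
The plan is to expand the left-hand side of \eqref{eq:kf_sf} by exploiting the block structure of $\tilde{H}$ and $\tilde{R}_{t+1}$, and then to massage it into the Kalman update using the Woodbury identity. First I would write out $\tilde{H}^T \tilde{R}_{t+1}^{-1} \tilde{H}$ and $\tilde{H}^T \tilde{R}_{t+1}^{-1} \tilde{z}_{t+1}$ in terms of the blocks. Since $\tilde{H}$ is the rowwise concatenation of $H$ and $I$, and $\tilde{R}_{t+1}$ is block diagonal with blocks $R$ and $\bP_{t+1}$, the Gram matrix collapses to
\begin{equation*}
  \tilde{H}^T \tilde{R}_{t+1}^{-1} \tilde{H} = H^T R^{-1} H + \bP_{t+1}^{-1},
\end{equation*}
and similarly
\begin{equation*}
  \tilde{H}^T \tilde{R}_{t+1}^{-1} \tilde{z}_{t+1} = H^T R^{-1} z_{t+1} + \bP_{t+1}^{-1} \bx_{t+1}.
\end{equation*}
So the SF estimate on the augmented system is $(\bP_{t+1}^{-1} + H^T R^{-1} H)^{-1}(H^T R^{-1} z_{t+1} + \bP_{t+1}^{-1}\bx_{t+1})$.

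Next I would reconcile this with the right-hand side. The excerpt already notes that the Kalman gain can be rewritten as $K_{t+1} = (\bP_{t+1}^{-1} + H^T R^{-1} H)^{-1} H^T R^{-1}$; this is the Woodbury/information-form identity applied to \eqref{eq:k_update}, and I would either cite that remark or re-derive it in one line. Writing $M = (\bP_{t+1}^{-1} + H^T R^{-1} H)^{-1}$, the target $\bx_{t+1} + K_{t+1}(z_{t+1} - H\bx_{t+1})$ becomes $\bx_{t+1} + M H^T R^{-1} z_{t+1} - M H^T R^{-1} H \bx_{t+1}$. The key algebraic step is to recognize that $M H^T R^{-1} H = M(\bP_{t+1}^{-1} + H^T R^{-1} H) - M \bP_{t+1}^{-1} = I - M\bP_{t+1}^{-1}$, so the $\bx_{t+1}$ terms combine as $\bx_{t+1} - (I - M\bP_{t+1}^{-1})\bx_{t+1} = M\bP_{t+1}^{-1}\bx_{t+1}$. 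This leaves exactly $M(H^T R^{-1} z_{t+1} + \bP_{t+1}^{-1}\bx_{t+1})$, matching the SF expression derived above, which closes the argument.

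The one genuine obstacle is handling invertibility: the statement implicitly assumes $\tilde{H}^T\tilde{R}_{t+1}^{-1}\tilde{H} = \bP_{t+1}^{-1} + H^T R^{-1} H$ is invertible, and the rewritten Kalman gain presupposes $\bP_{t+1}$ is invertible. Since $\bP_{t+1} = F P_t F^T + Q$ with $Q$ a covariance, positive definiteness of $Q$ (or of the iterates $P_t$) suffices; I would state this regularity assumption explicitly, or alternatively phrase the whole computation so that only $\bP_{t+1}^{-1}$ appears and note that $\bP_{t+1}^{-1} + H^T R^{-1} H \succeq \bP_{t+1}^{-1} \succ 0$. Everything else is bookkeeping with $2\times 2$ block matrices and the single Woodbury manipulation, so I do not anticipate further difficulty; I would relegate the full computation to the supplement as the authors indicate.
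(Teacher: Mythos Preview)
Your proposal is correct and follows essentially the same route as the paper: expand the block structure to get $(\bP_{t+1}^{-1}+H^T R^{-1} H)^{-1}(H^T R^{-1} z_{t+1}+\bP_{t+1}^{-1}\bx_{t+1})$, then use a Woodbury-type identity to match this to the Kalman update. The only cosmetic difference is that the paper first applies Woodbury to rewrite $(\bP_{t+1}^{-1}+H^T R^{-1} H)^{-1}=(I-K_{t+1}H)\bP_{t+1}$ and separately derives $K_{t+1}=(I-K_{t+1}H)\bP_{t+1}H^T R^{-1}$ before combining, whereas you invoke the information-form gain $K_{t+1}=M H^T R^{-1}$ directly (as the paper already noted in the introduction) and use the tautology $M H^T R^{-1} H = I - M\bP_{t+1}^{-1}$; both are the same manipulation in slightly different packaging.
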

	
	\begin{remark}
		We can think of the last state estimate \smash{$\hx_t$} in the theorem (which is 
		propagated forward via \smash{$\bx_{t+1}=F \hx_t$}) as the previous output
		from SF itself, when applied to the appropriate augmented system. More
		precisely, by induction, Theorem \ref{thm:kf_sf} says that iteratively applying
		SF to \smash{$\tilde{z}_{t+1},\tilde{H},\tilde{R}_{t+1}$} across times
		$t=0,1,2,\ldots$, where each \smash{$\bx_{t+1}=F \hx_t$} is the intermediate 
		prediction using the last SF estimate \smash{$\hx_t$}, produces a sequence
		\smash{$\hx_{t+1}$}, $t=0,1,2,\ldots$ that matches the state estimates from the
		KF.   
	\end{remark}
	
	\begin{remark}
		The result in Theorem \ref{thm:kf_sf} can be seen from a Bayesian
		perspective, as was pointed out by an anonymous reviewer. When the initial state
		$x_0$ and all noise terms in \eqref{eq:process}, \eqref{eq:measure} are
		Gaussian, recall the KF reduces to the Bayes estimator. Here the posterior is
		the product of a Gaussian likelihood and Gaussian prior, and is thus itself 
		Gaussian.  (The proof of this standard fact uses similar arguments to the proof
		of Theorem \ref{thm:kf_sf}.)  Meanwhile, in augmented SF, we can view the
		Gaussian likelihood being maximized as the product of the Gaussian density of
		$z_{t+1}$ and that of \smash{$\bx_{t+1}$}.  This matches the posterior used by
		the KF, where the density of \smash{$\bx_{t+1}$} plays the role of the prior in
		the KF.  Therefore in each case, we are defining our estimate to be the mean of
		the same Gaussian distribution.
	\end{remark}
	
	\begin{remark}
		The equivalence between SF and KF can be extended beyond the case of linear 
		process and linear measurement models. Given a nonlinear process map $f$ and 
		a nonlinear process model $h$, suppose we define \smash{$\bx_{t+1}=f(\hx_t)$},    
		\smash{$F_{t+1}=Df(\hx_t)$} (the Jacobian of $f$ at \smash{$\hx_t$}), and 
		\smash{$H_{t+1}=Dh(\bx_{t+1})$} (the Jacobian of $h$ at \smash{$\bx_{t+1}$}). 
		Suppose we define the augmented measurement vector as 
		\begin{equation}
			\label{eq:tz_nonlinear}
			\tilde{z}_{t+1}=\big(z_{t+1}+H_{t+1} \bx_{t+1} - h(\bx_{t+1}), \,
			\bx_{t+1}\big), 
		\end{equation}
		where we have offset the measurement $z_{t+1}$ by the residual 
		\smash{$H_{t+1} \bx_{t+1} - h(\bx_{t+1})$} from linearization.  Suppose, as in
		the theorem, we define the augmented measurement map  
		\smash{$\tilde{H}_{t+1} \in \R^{(d+k) \times k}$} to be the rowwise
		concatenation of $H_{t+1}$ and $I \in \R^{k \times k}$, and define
		\smash{$\tilde{R}_{t+1} \in \R^{(d+k) \times (d+k)}$} as in
		\eqref{eq:block_cov}, for \smash{$\bP_{t+1}$} as in \eqref{eq:p_predict}, but
		with $F_{t+1},H_{t+1}$ in place of $F,H$.  In the supplement, we prove that 
		\begin{equation}
			\label{eq:kf_sf_nonlinear}
			(\tilde{H}_{t+1}^T \tilde{R}_{t+1}^{-1} \tilde{H}_{t+1})^{-1} \tilde{H}_{t+1}^T  
			\tilde{R}_{t+1}^{-1} \tilde{z}_{t+1} = 
			\bx_{t+1} + K_{t+1} \big(z_{t+1} - h(\bx_{t+1})\big),
		\end{equation}
		where $K_{t+1}$ is as in \eqref{eq:k_update}, but with $F_{t+1},H_{t+1}$ in
		place of $F,H$.  The right-hand side above is precisely the {\it extended} KF 
		(EKF).  The left-hand side is what we might call {\it extended} SF (ESF).  
	\end{remark}
	
	
	\section{Equivalence between SF and regression}
	\label{sec:sf_reg}
	
	Suppose that in our linear dynamical system, at each time $t$, we observe the 
	measurement $z_t$, make a prediction \smash{$\hx_t$} for $x_t$, then
	later observe the state $x_t$ itself. (This setup indeed describes the influenza 
	nowcasting problem, a central motivating example that we will describe   
	shortly.) In such problems, we can estimate $R$ using the empirical
	covariance on past data.  When we plug this into \eqref{eq:sf}, it turns out
	SF reduces to a prediction from a constrained regression of past states on past
	measurements.  
	
	\subsection{Equivalent regression problem}
	
	In making a prediction at time $t+1$, we assume in this section that we observe
	past states. We may assume without a loss of generality that we observe the full
	past $x_i$, $i=1,\ldots,t$ (if this is not the case, and we observe only some
	subset of the past, then the only changes to make in what follows are
	notational). Assuming the measurement noise covariance $R$ is unknown, we may
	use 
	\begin{equation}
		\label{eq:emp_cov}
		\hR_{t+1} = \frac{1}{t} \sum_{i=1}^t (z_i - Hx_i)(z_i - Hx_i)^T, 
	\end{equation}
	the empirical (uncentered) covariance based on past data, as an estimate.  Under
	this choice, it turns out that sensor fusion \eqref{eq:sf} is exactly equivalent
	to a regression of states on measurements, subject to certain equality 
	constraints.  The proof is elementary, but requires detailed arguments.  It   
	is deferred until the supplement.
	
	\begin{theorem}
		\label{thm:sf_reg}
		Let \smash{$\hR_{t+1}$} be as in \eqref{eq:emp_cov} (assumed to be invertible). 
		Consider the SF prediction at time $t+1$, with \smash{$\hR_{t+1}$} in place of
		$R$. Denote this by \smash{$\hx_{t+1} = \hB^T z_{t+1}$}, where 
		$$
		\hB^T = (H^T \hR_{t+1}^{-1} H)^{-1} H^T \hR_{t+1}^{-1} 
		$$
		(and \smash{$H^T \hR_{t+1}^{-1} H$} is assumed invertible).  Each column of  
		\smash{$\hB$}, denoted \smash{$\hb_j \in \R^d$}, $j=1,\ldots,k$, solves
		\begin{equation}
			\label{eq:con_reg}
			\begin{alignedat}{2}
				&\minimize_{b_j \in \R^d} && \sum_{i=1}^t (x_{ij} - b_j^T z_i)^2 \\
				&\st \quad && H^T b_j = e_j,
			\end{alignedat}
		\end{equation}
		where $e_j \in \R^d$ is the $j$th standard basis vector (all 0s except for a 1
		in the $j$th component).
	\end{theorem}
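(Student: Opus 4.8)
The plan is to reduce the constrained regression in \eqref{eq:con_reg} to a trivial equality-constrained quadratic program by first exploiting the constraint to rewrite the objective. The key observation is that the constraint $H^T b_j = e_j$ exactly cancels the ``signal'' part of the fitted value and leaves only a quadratic form in the empirical residuals. Write each measurement as $z_i = H x_i + \epsilon_i$, where $\epsilon_i := z_i - Hx_i \in \R^d$, so that $t\hR_{t+1} = \sum_{i=1}^t \epsilon_i \epsilon_i^T$ by \eqref{eq:emp_cov}. For any $b_j$ feasible for \eqref{eq:con_reg}, $b_j^T z_i = (H^T b_j)^T x_i + b_j^T \epsilon_i = e_j^T x_i + b_j^T \epsilon_i = x_{ij} + b_j^T \epsilon_i$, hence $x_{ij} - b_j^T z_i = -b_j^T \epsilon_i$. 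Summing squares, the objective of \eqref{eq:con_reg} equals $\sum_{i=1}^t (b_j^T \epsilon_i)^2 = b_j^T\big(\sum_{i=1}^t \epsilon_i\epsilon_i^T\big) b_j = t\, b_j^T \hR_{t+1} b_j$ on the feasible set. So \eqref{eq:con_reg} is equivalent to minimizing $b_j^T \hR_{t+1} b_j$ subject to $H^T b_j = e_j$.

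Next I would solve this reduced program. Since $\hR_{t+1}$ is a sum of outer products it is positive semidefinite, and since it is assumed invertible it is in fact positive definite; thus $b_j \mapsto b_j^T \hR_{t+1} b_j$ is strictly convex and, restricted to the affine feasible set (which is nonempty because invertibility of $H^T\hR_{t+1}^{-1}H$ forces $H$ to have full column rank, so $H^T$ is surjective), has a unique minimizer, characterized by the Lagrange/KKT conditions. Introducing a multiplier $\lambda \in \R^k$ for the constraint, stationarity gives $\hR_{t+1} b_j = H\lambda$, i.e.\ $b_j = \hR_{t+1}^{-1} H \lambda$; substituting into $H^T b_j = e_j$ gives $H^T \hR_{t+1}^{-1} H \lambda = e_j$, and using the assumed invertibility of $H^T\hR_{t+1}^{-1}H$, $\lambda = (H^T\hR_{t+1}^{-1}H)^{-1} e_j$. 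Therefore the unique solution is $b_j = \hR_{t+1}^{-1} H (H^T\hR_{t+1}^{-1}H)^{-1} e_j$, which is precisely the $j$th column $\hb_j$ of $\hB = \hR_{t+1}^{-1} H (H^T\hR_{t+1}^{-1}H)^{-1}$, i.e.\ the transpose of the SF coefficient matrix in the theorem statement. This proves the claim (and in fact shows $\hb_j$ is the \emph{unique} minimizer).

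The only step with any real content is the reparametrization in the first paragraph: recognizing that imposing $H^T b_j = e_j$ turns the fitted residual $x_{ij} - b_j^T z_i$ into $-b_j^T(z_i - Hx_i)$, so that the least-squares objective becomes the $\hR_{t+1}$-quadratic form $t\,b_j^T\hR_{t+1}b_j$. Everything after that is a routine Lagrange-multiplier computation. The minor points to be careful about are the degeneracies: confirming that invertibility of $\hR_{t+1}$ upgrades to positive definiteness (so strict convexity and uniqueness hold), and that invertibility of $H^T\hR_{t+1}^{-1}H$ both guarantees feasibility of the constraint and legitimizes the matrix inverses appearing in $\hB$; these are exactly the hypotheses assumed in the statement, so no extra work is needed.
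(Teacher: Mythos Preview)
Your proof is correct and takes a genuinely different route from the paper's. The paper applies the Lagrangian directly to the original least-squares criterion $\|X_j - Zb_j\|_2^2$, obtains the stationarity condition $\hb_j = (Z^TZ)^{-1}(Z^TX_j - H\hat u_j)$, and then through algebraic manipulation of $\hR_{t+1}\hb_j$ shows that $\hb_j$ must lie in the column space of $\hR_{t+1}^{-1}H$; it finishes by observing that the SF coefficient $\hat a_j$ also lies there and both satisfy $H^T(\cdot)=e_j$, so invertibility of $H^T\hR_{t+1}^{-1}H$ forces $\hat a_j=\hb_j$. Your approach instead uses the constraint \emph{first} to collapse the objective on the feasible set to $t\,b_j^T\hR_{t+1}b_j$, and then solves the resulting equality-constrained quadratic in one line. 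This buys you two things: the argument is shorter and more transparent (it makes explicit why the empirical residual covariance is the only data-dependent object that matters), and it avoids the paper's implicit appeal to invertibility of $Z^TZ$, which is not among the stated hypotheses. The paper's route, on the other hand, generalizes more mechanically to the ridge/covariance-shrinkage variant, since it never relies on the exact cancellation $x_{ij}-b_j^Tz_i=-b_j^T\epsilon_i$ that your reduction exploits.
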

	
	\begin{remark}
		As discussed in the introduction, the interpretation of \smash{$(H^T
			\hR_{t+1}^{-1} H)^{-1} H^T \hR_{t+1}^{-1} z_{t+1}$} as the coefficients from
		regressing $z_{t+1}$(the response) onto $H$ (the covariates) is more or less
		immediate. Interpreting the same quantity as \smash{$\hB^T z_{t+1} = (\hb_1^T 
			z_{t+1},\ldots,\hb_k^T z_{t+1})$}, the predictions from historically
		regressing $x_i$, $i=1,\ldots,t$ (the response) onto $z_i$, $i=1,\ldots,t$ (the
		covariates), however, is much less obvious.  The latter is a {\it forwards} or
		{\it direct} regression approach to predicting $x_{t+1}$, whereas SF was
		originally defined via the {\it backwards} or {\it indirect} perspective
		inherent to the measurement model \eqref{eq:measure}.
	\end{remark}
	
	
	\subsection{Influenza nowcasting}
	\label{sec:flu_setup}
	
	An example that we will revisit frequently, for the rest of the paper, is the
	following influenza (or flu) nowcasting problem.  The state variable of interest
	is the weekly percentage of weighted influenza-like illness (wILI), a measure of
	flu incidence provided by the Centers for Disease Control and Prevention (CDC),
	in each of the $k=51$ US states (including DC). Because it takes time for the
	CDC to collect and compile this data, they release wILI values with a 1 week
	delay. Meanwhile, various proxies for the flu (i.e., data sources that are
	potentially correlated with flu incidence) are available in real time, e.g., web
	search volume for flu-related terms, site traffic metrics for flu-related pages,
	pharmaceutical sales for flu-related products, etc.  We can hence train (using
	historical data) sensors to predict wILI, one from each data source, and plug
	them into sensor fusion \eqref{eq:sf} in order to ``nowcast'' the current flu
	incidence (that would otherwise remain unknown for another week).
	
	Such a sensor fusion system for flu nowcasting, using $ d = 308$ sensors (flu
	proxies), is described in Chapter 4 of 
	\citet{farrow2016modeling}\footnote{This is more than just a hypothetical   
		system; it is fully operational, and run by the Carnegie Mellon DELPHI group   
		to provide real-time nowcasts of flu incidence every week, in all US  
		states, plus select regions, cities, and territories. 
		(See \url{https://delphi.midas.cs.cmu.edu}).}.
	In addition to the surveillance sensors described above (search volume for flu
	terms, site traffic metrics for flu pages, etc.), the measurement vector in this
	nowcasting system also uses a sensor that is trained to make predictions of wILI
	using a seasonal autoregression with 3 lags (SAR3).  From the KF-SF equivalence 
	established in Section \ref{sec:kf_sf}, we can think of this SAR3 sensor as
	serving the role of something like a process model, in the underlying dynamical
	system.   
	
	While wILI itself is available at the US state level, the data source used to
	train each sensor may only be available at coarser geographic resolution.
	Thus, importantly, each sensor outputs a prediction at a different
	geographic resolution (which reflects the resolution of its corresponding data
	source).  As an example, the number of visits to flu-related CDC pages are
	available for each US state separately; so for each US state, we train a
	separate sensor to predict wILI from CDC site traffic. However, counts for
	Wikipedia page visits are only available nationally; so we train just one sensor
	to predict national wILI from Wikipedia page visits. 
	
	Assuming unbiasedness of all the sensors, we construct the map $H$ in
	\eqref{eq:measure} so that its rows reflect the geography of the sensors.  For 
	example, if a sensor is trained on data that is available at the $i$th US state,
	then its associated row in $H$ is  
	$$
	(0, \ldots \underset{\substack{\uparrow \\ i}}{1}, \ldots 0);
	$$
	and if a sensor is trained on data from the aggregate of the first 3 US states,
	then its associated row is  
	$$
	(w_1, w_2, w_3, 0, \ldots 0),
	$$
	for weights $w_1,w_2,w_3 > 0$ such that $w_1+w_2+w_3=1$, based on
	relative state populations; and so on. Figure \ref{fig:flu_simple} illustrates
	the setup in a simple example.  
	
	\begin{figure}[t]
		\floatbox[{\capbeside\thisfloatsetup{capbesideposition={right,center},
				capbesidewidth=0.45\textwidth}}]{figure}[\FBwidth]
		{\caption[short caption]{\it\small Simplified version of the flu nowcasting 
				problem, with $k=3$ states and $d=8$ sensors. We have a 3-level hierarchy,  
				where $x_1, x_2, x_3$ are part of the first region and $x_4, x_5$ are part of 
				the second. The national level is at the root. As for the sensors, we
				have one at each state, one at each region, and one at the national level. 
				Assuming all states have equal populations, the sensor map $H$ is
				$$ 
				H = 
				\begin{bmatrix}
				1 & 0 & 0 & 0 & 0\\
				0 & 1 & 0 & 0 & 0 \\
				0 & 0 & 1 & 0 & 0 \\
				0 & 0 & 0 & 1 & 0 \\
				0 & 0 & 0 & 0 & 1 \\
				\sfrac{1}{3}  & \sfrac{1}{3} & \sfrac{1}{3}  & 0 & 0\\
				0 & 0 & 0 & \sfrac{1}{2} & \sfrac{1}{2} \\
				\sfrac{1}{5}  & \sfrac{1}{5}  & \sfrac{1}{5} & \sfrac{1}{5} &\sfrac{1}{5} \\ 
				\end{bmatrix}.
				$$}
			\label{fig:flu_simple}}
		{\includegraphics[width=0.55\textwidth]{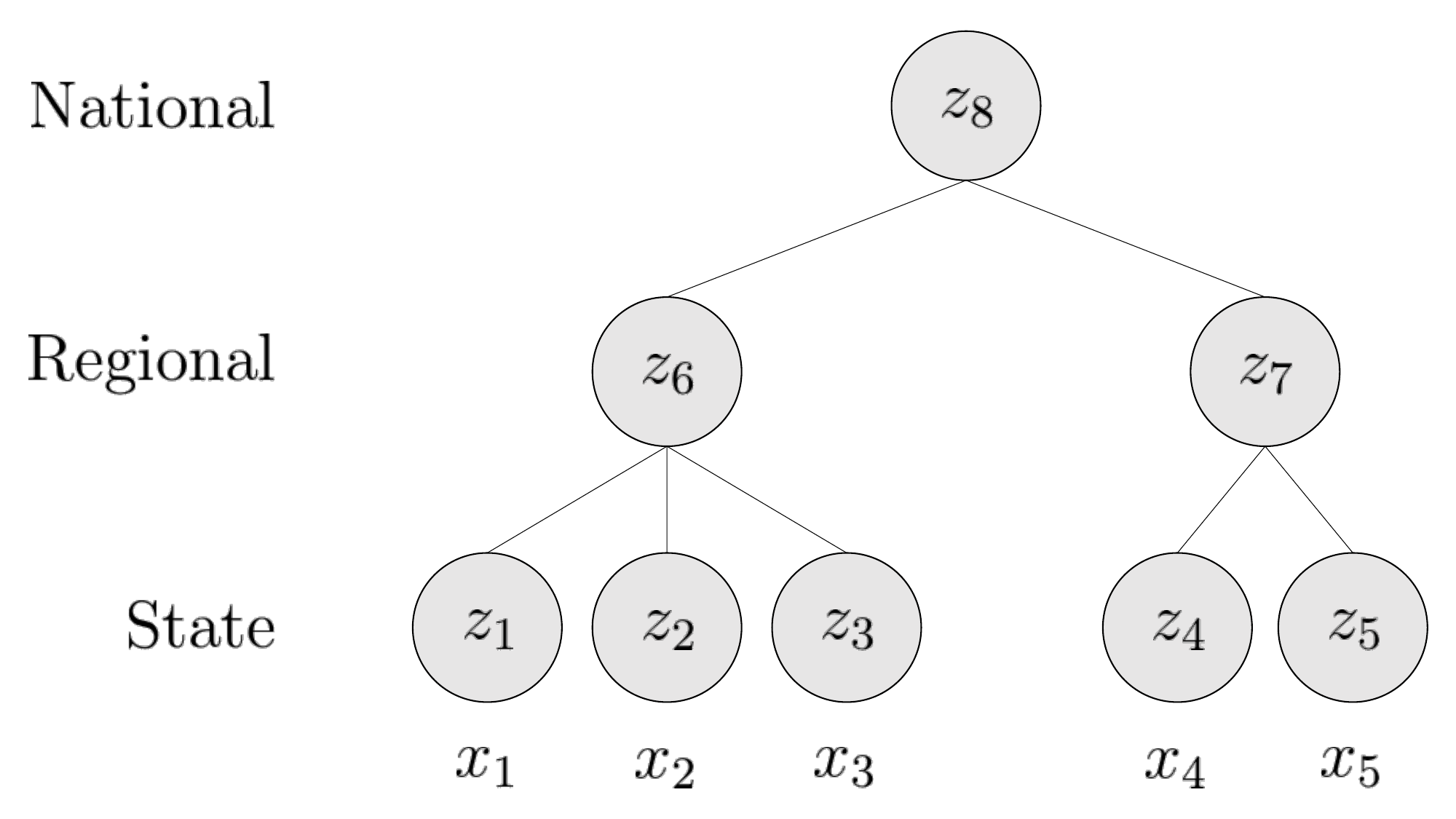}} 
		\vspace{-15pt}
	\end{figure}
	
	\subsection{Interpreting the constraints}
	
	At a high-level, the constraints in \eqref{eq:con_reg} encode information
	about the measurement model \eqref{eq:measure}. They also provide some kind 
	of implicit regularization.  Interestingly, as we will see later in Section
	\ref{sec:flu_demo}, this can still be useful when used in addition to more
	typical (explicit) regularization.  
	
	How can we interpret these constraints?  We give three interpretations, the
	first one specific to the flu forecasting setting, and the next two general.
	
	\paragraph{Flu interpretation.} In the flu nowcasting problem, recall, the map 
	$H$ has rows that sum to 1, and they reflect the geographic level at which
	the corresponding sensors were trained (see Section \ref{sec:flu_setup}).  The
	constraints $H^T b_j=e_j$, $j=1,\ldots,k$ can be seen in this case as a
	mechanism that accounts for the geographical hierachy underlying the sensors.
	As a concrete example, consider the simplified setup in Figure
	\ref{fig:flu_simple}, and $j=3$. The constraint $H^T b_3=e_3$ reads: 
	\begin{align*}
		b_{31} + \sfrac{1}{3} \, b_{36} + \sfrac{1}{5} \, b_{38} &= 0, \\
		b_{32} + \sfrac{1}{3} \, b_{36} + \sfrac{1}{5} \, b_{38} &= 0, \\
		b_{33} + \sfrac{1}{3} \, b_{36} + \sfrac{1}{5} \, b_{38} &= 1, \\
		b_{34} + \sfrac{1}{3} \, b_{37} + \sfrac{1}{5} \, b_{38} &= 0, \\
		b_{35} + \sfrac{1}{3} \, b_{37} + \sfrac{1}{5} \, b_{38} &= 0. 
	\end{align*}
	The third line can be interpreted as follows: an increase of 1 unit in
	sensor $z_3$, $1/3$ units in $z_6$, and $1/5$ units in $z_8$, holding all other
	sensors fixed, should lead to an increase in 1 unit of our prediction for
	$x_3$. This is a natural consequence of the hierarchy in the sensor model
	\eqref{eq:measure}, visualized in Figure \ref{fig:flu_simple}.  The first line
	can be read as: an increase of 1 unit in sensor $z_1$, $1/3$ units in $z_6$, and
	$1/5$ in $z_8$, with all others fixed, should not change our prediction for
	$x_3$. This is also natural, following from the hierachy (i.e., such a change
	must have been propogated by $x_1$).  The other lines are similar. 
	
	\paragraph{Invariance interpretation.} The SF prediction (at time $t+1$) is
	\smash{$\hx_{t+1} = \hB^T z_{t+1}$}.  To denoise (i.e., estimate the mean of)
	the measurement $z_{t+1}$, based on the model \eqref{eq:measure}, we 
	could use \smash{$\hat{z}_{t+1} = H \hx_{t+1}$}.  Given the 
	denoised \smash{$\hat{z}_{t+1}$}, we could then refit our state prediction via
	\smash{$\tilde{x}_{t+1}=\hB^T \hat{z}_{t+1}$}.  But due to the constraint
	\smash{$H^T \hB = I$} (a compact way of expressing \smash{$H^T \hb_j=e_j$}, for
	$j=1,\ldots,k$), it holds that \smash{$\tilde{x}_{t+1} = \hB^T H \hx_{t+1} =
		\hx_{t+1}$}. This is a kind of {\it invariance} property.  In other words, we
	can go from estimating states, to refitting measurements, to refitting states,
	etc., and in this process, our state estimates will not change.  
	
	\paragraph{Generative interpretation.} Assume $t \geq k$, and fix an 
	arbitrary $j = 1,\ldots,k$ as well as $b_j \in \R^k$.  The constraint $H^T b_j = 
	e_j$ implies, by taking an inner product on both sides with $x_i$,
	$i=1,\ldots,k$, 
	$$
	(Hx_i)^T b_j = x_{ij}, \quad i=1,\ldots,k.
	$$
	If we assume $x_i$, $i=1,\ldots,k$ are linearly independent, then the above
	linear equalities are not only implied by $H^T b_j = e_j$, they are actually
	equivalent to it.  Invoking the model \eqref{eq:measure}, we may rewrite
	the constraint $H^T b_j = e_j$ as
	\begin{equation}
		\label{eq:gen_int}
		\E(b_j^T z_i | x_i) = x_{ij}, \quad i=1,\ldots,k.
	\end{equation}
	In the context of problem \eqref{eq:con_reg}, this is a statement about a {\it
		generative} model for the data (as $z_i|x_i$ describes the distribution of
	the covariates conditional on the response). The representation in
	\eqref{eq:gen_int} shows that \eqref{eq:con_reg} constrains the regression
	estimator to have the correct conditional predictions, on average, on the data
	we have already seen $(x_i,z_i)$, $i=1,\ldots,k$.  (Note here we did not have to
	use the first $k$ time points; any past $k$ time points would suffice.) 
	
	\subsection{Modifications and equivalences}
	\label{sec:sf_mod}
	
	In the supplement, we show that two modifications of the basic SF formulation
	also have equivalences in the regression perspective: namely, shrinking the
	empirical covariance in \eqref{eq:emp_cov} towards the identity is equivalent to
	adding a ridge (squared $\ell_2$) penalty to the criterion in
	\eqref{eq:con_reg}; and also, adding a null sensor at each state (one that
	always outputs 0) is equivalent to removing the constraints in
	\eqref{eq:con_reg}.   The latter equivalence here provides indirect but
	fairly compelling evidence that the constraints in the regression formulation
	\eqref{eq:con_reg} play an important role (under the model \eqref{eq:measure}): 
	it says that removing them is equivalent to including meaningless null sensors,
	which intuitively should worsen its predictions.        
	
	\section{Flu nowcasting application}
	\label{sec:flu_demo}
	
	\paragraph{Experimental setup.}  We examine the performance of our methods for
	nowcasting (one-week-ahead prediction of) wILI across 5 flu seasons, from 2013
	to 2018 (total of 140 weeks). Recall the setup described in Section
	\ref{sec:flu_setup}, with $k=51$ states and $d = 308$ measurements. At week
	$t+1$, we derive an estimate \smash{$\hx_{t+1}$} of the current wILI in the 51
	US states, based on sensors $z_{t+1}$ (each sensor being the output of an
	algorithm trained to predict wILI at a different geographic resolution from a
	given data source), and past wILI and sensor data. We consider 7 methods for
	computing the nowcast \smash{$\hx_{t+1}$}: (i) SF, or equivalently, constrained
	regression \eqref{eq:con_reg}; (ii) SF as in \eqref{eq:con_reg}, but with an
	additional ridge (squared $\ell_2$) penalty (equivalently, SF with covariance
	shrinkage); (iii) SF as in \eqref{eq:con_reg}, but with an additional lasso
	($\ell_1$) penalty; (iv/v) regression as in \eqref{eq:con_reg}, but without
	constraints, and using a ridge/lasso penalty; (vi) random forests (RF)
	\citep{breiman2001random}, trained on all of the sensors; (vii) RF, but trained 
	on all of the underlying data sources used to fit the sensors. 
	
	At prediction week $t+1$, we use the last 3 years (weeks $t-155$ through $t$) as
	the training set for all 7 methods. We do not implement unpenalized 
	regression (as in \eqref{eq:con_reg}, but without constraints), as it is not
	well-defined (156 observations and 308 covariates).\footnote{SF is still
		well-defined, due of the constraint in \eqref{eq:con_reg}: a nonunique
		solution only occurs when the (random) null space of the covariate matrix has
		a nontrivial intersection with the null space of $H^T$, which essentially
		never happens.}  All ridge and lasso tuning parameters are chosen by
	optimizing one-week-ahead prediction error over the latest 10 weeks of data
	(akin to cross-validation, but for a time series context like ours). Python code
	for this nowcasting experiment is available at
	\url{http://github.com/mariajahja/kf-sf-flu-nowcasting}.     
	
	\begin{figure}[t!]
		\centering
		\includegraphics[width=\textwidth]{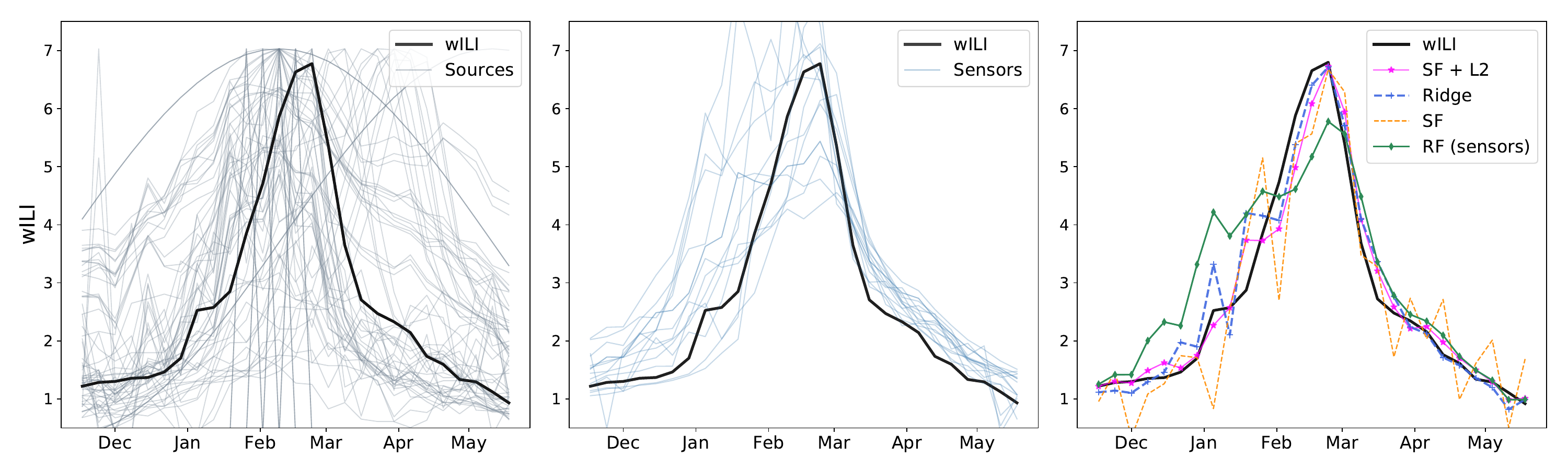} \\
		\smallskip\smallskip
		\includegraphics[width=0.95\textwidth]{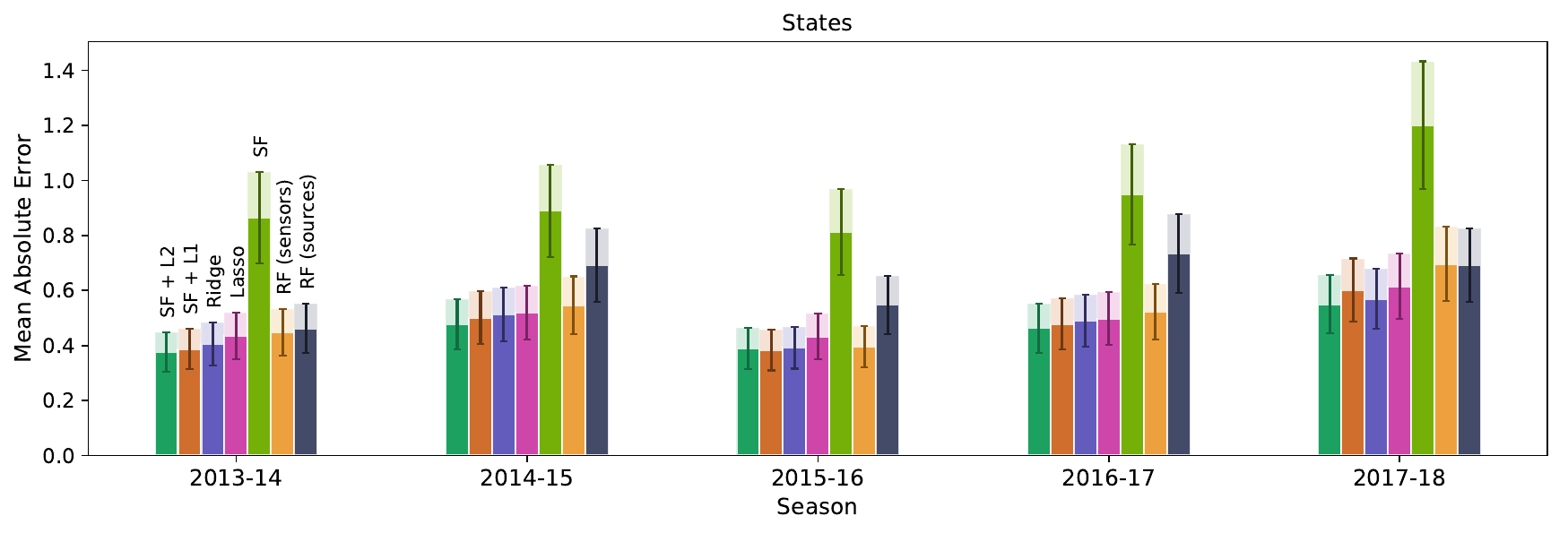}
		\caption{\it\small Top row, from left to right: data sources, sensors, and
			nowcasts are compared to the underlying wILI values for Pennsylvania during
			flu season 2017-18. For visualization purposes, the sources are scaled to fit 
			the range of wILI. On the rightmost plot, we display nowcasts using select
			methods. Bottom row: MAEs (full colors) and MADs (light colors) of nowcasts 
			over 5 flu seasons from 2013-14 to 2017-18.} 
		\label{fig:flu_results}
	\end{figure}
	
	\paragraph{Missing data.} Unfortunately, sensors are observed at not only
	varying geographic resolutions, but also varying temporal resolutions (since 
	their underlying data sources are), and missing values occur.  In our
	experiments, we choose to compute predictions using the regression perspective,
	and apply a simple mean imputation approach (using only past sensor data),
	before fitting all models.  
	
	\paragraph{Nowcasting results.} The bottom row of Figure \ref{fig:flu_results}
	displays the mean absolute errors (MAEs) from one-week-ahead predictions by the
	7 methods considered, averaged over the 51 US states, for each of the 5
	seasons. Also displayed are the mean absolute deviations (MADs), in light  
	colors. We see that SF with ridge regularization is generally the most accurate
	over the 5 seasons, SF with lasso regularization is a close second, and SF
	without any regularization is the worst.  Thus, clearly, explicit regularization
	helps. Importantly, we also see that the constraints in the regression problem 
	\eqref{eq:con_reg} (which come from its connection to SF) play a key role: in
	each season, SF with ridge regularization outperforms ridge regression, and SF
	with lasso regularization outperforms the lasso.  Therefore, the constraints
	provide additional (beneficial) implicit regularization.   
	
	RF trained on sensors performs somewhat competitively.  RF trained on sources is
	more variable (in some seasons, much worse than RF on sensors).  This 
	observation indicates that training the sensors is an important step for
	nowcasting accuracy, as this can be seen as a form of denoising, and
	suggests a view of all the methods we consider here (except RF on sources) as
	prediction assimilators (rather than data assimilators).  Finally, the top row
	Figure \ref{fig:flu_results} visualizes the nowcasts for Pennsylvania in the
	2017-18 season.  We can see that SF, RF (on sensors), and even ridge regression
	are noticeably more volatile than SF with ridge regularization.  
	
	\section{Discussion and extensions}
	\label{sec:end}
	
	In this paper, we studied connections between the Kalman filter, sensor fusion,
	and regression. We derived equivalences between the first two and latter two, 
	discussed the general implications of our results, and studied the application
	of our work to nowcasting the weekly influenza levels in the US. We conclude
	with some ideas for extending the constrained regression formulation
	\eqref{eq:con_reg} of SF.    
	
	
	\paragraph{Sensor selection.} The problem of selecting a small number of relevant
	sensors (on which to perform sensor fusion) among a possibly large number, which 
	we can call {\it sensor selection}, is quite a difficult problem.  Beyond this,
	measurement selection in the Kalman filter is a generally difficult problem. As
	far as we know, this is an active and relatively open area of research. 
	On the other hand, in regression, variable selection is extremely well-studied,
	and $\ell_1$ regularization (among many other tools) is now very well-developed
	(see, e.g., \citet{hastie2009elements,hastie2015statistical}).  Starting from the
	regression formulation for SF in \eqref{eq:con_reg}, it would be natural
	to add to the criterion an $\ell_1$ or {\it lasso} penalty
	\citep{tibshirani1996regression} to select relevant sensors, 
	\begin{equation}
		\label{eq:con_lasso}
		\begin{alignedat}{2}
			&\minimize_{b_j \in \R^d} && \frac{1}{t} \sum_{i=1}^t (x_{ij} - b_j^T z_i)^2 +
			\lambda_j \|b_j\|_1 \\  
			&\st \quad && H^T b_j = e_j,
		\end{alignedat}
	\end{equation}
	where \smash{$\|b_j\|_1=\sum_{\ell=1}^k|b_{j\ell}|$}, $j=1,\ldots,k$.  It is
	not clear (nor likely) that \eqref{eq:con_lasso} has an equivalent SF
	formulation, but the exact equivalence when $\lambda_j=0$ suggests that   
	\eqref{eq:con_lasso} could be a reasonable tool for sensor selection.  (Indeed, 
	without even considering its sensor selection capabilities, this performed
	respectably for predictive purposes in the experiments in Section
	\ref{sec:flu_demo}.)  Further, we can perform a kind of process model selection
	with \eqref{eq:con_lasso} by augmenting our measurement vector with multiple 
	candidate process models, and penalizing only their coefficients. An example is
	given in the supplement.  
	
	\paragraph{Joint sensor learning.} In the flu nowcasting problem, recall, the
	sensors are outputs of predictive models, each trained individually to predict
	wILI from a particular data source (flu proxy).  Denote by $u_i \in \R^d$,
	$i=1,\ldots,t$ the data sources at times 1 through $t$. Instead of learning the
	sensors (predictive transformations of these sources) individually, we could
	learn them jointly, by extending \eqref{eq:con_reg} into: 
	\begin{equation}
		\label{eq:con_joint}
		\begin{alignedat}{2}
			&\minimize_{\substack{b_j \in \R^d, \, j=1,\ldots,d \\ f \in \mathcal{F}}} \quad
			&& \frac{1}{t} \sum_{j=1}^ d \sum_{i=1}^t \big(x_{ij} -  b_j^T f(u_i) \big)^2 
			+ \lambda P(f) \\   
			&\;\; \st \quad && H^T b_j = e_j, \quad j=1,\ldots,k.
		\end{alignedat}
	\end{equation}
	Here $\mathcal{F}$ is a space of functions from $\R^d$ to $\R^d$ (e.g., diagonal
	linear maps) and $P$ is a penalty to be specified by the modeler (e.g., the
	Frobenius norm in the linear map case). The key in \eqref{eq:con_joint} is that
	we are simultaneously learning the sensors and assimilating them.     
	
	\paragraph{Gradient boosting.}  Solving \eqref{eq:con_joint} is computationally 
	difficult (even in the simple linear map case, it is nonconvex).  An
	alternative that is more tractable is to proceed iteratively, in a manner
	inspired by {\it gradient boosting} \citep{friedman2001greedy}.  For each
	$j=1,\ldots,d$, let $A_j$ be an algorithm (``base learner'') that we use to
	fit sensor $j$ from data source $j$.  Write $y_i=Hx_i$, $i=1,\ldots,t$, and let
	$\eta > 0$ be a small fixed learning rate.  To make a prediction at time $t+1$,  
	we initialize \smash{$x_i^{(0)}=0$}, $i=1,\ldots,t+1$ (or initialize at the fits
	from the usual linear SF), and repeat, for boosting iterations $b=1,\ldots,B$:
	\begin{itemize}
		\item For $j=1,\ldots,d$:
		\begin{itemize}
			\item Let {$y^{(b-1)}_{ij}=(Hx^{(b-1)})_{ij}$}, for $i=1,\ldots,t$. 
			\item Run $A_j$ with responses {$\{y_{ij}-y^{(b-1)}_{ij}\}_{i=1}^t$} and
			covariates {$\{u_{ij}\}_{i=1}^t$}, to produce 
			{$\bar{f}^{(b)}_j$}.   
			\item Define intermediate sensors
			{$z^{(b)}_{ij}=\bar{f}^{(b)}_j(u_{ij})$}, for $i=1,\ldots,t+1$. 
		\end{itemize}
		\item For $j=1,\ldots,k$:
		\begin{itemize} 
			\item Run SF as in \eqref{eq:con_reg} (possibly with regularization) with
			responses {$\{x_{ij}-x^{(b-1)}_{ij}\}_{i=1}^t$} and 
			covariates {$\{z^{(b)}\}_{i=1}^t$}, to produce {$\hb_j$}.
			\item Define intermediate state fits {$\bar{x}^{(b)}_{ij}=\hb_j^T 
				z^{(b)}_i$}, for $i=1,\ldots,t+1$.  
			\item Update total state fits {$x^{(b)}_{ij} = x^{(b-1)}_{ij} +
				\eta \bar{x}^{(b)}_{ij}$}, for $i=1,\ldots,t+1$.    
		\end{itemize}
	\end{itemize}
	We return at the end our final prediction \smash{$\hx_{t+1}=x^{(B)}_{t+1}$}.
	It would be interesting to pursue this approach in detail, and study the extent
	to which it can improve on the usual linear SF.
	
	
	\paragraph{Acknowledgments.} We thank Logan Brooks for several helpful
	conversations and brainstorming sessions. MJ was supported by NSF Graduate
	Research Fellowship No.\ DGE-1745016.  RR and RJT were supported by DTRA
	Contract No.\ HDTRA1-18-C-0008.

 \bibliographystyle{plainnat}
 \bibliography{arxiv}

\begin{thebibliography}{18}
\providecommand{\natexlab}[1]{#1}
\providecommand{\url}[1]{\texttt{#1}}
\expandafter\ifx\csname urlstyle\endcsname\relax
  \providecommand{\doi}[1]{doi: #1}\else
  \providecommand{\doi}{doi: \begingroup \urlstyle{rm}\Url}\fi

\bibitem[Anderson and Moore(1979)]{anderson1979optimal}
Brian D.~O. Anderson and John~B. Moore.
\newblock \emph{Optimal Filtering}.
\newblock Prentice-Hall, 1979.

\bibitem[Breiman(1996)]{breiman1996stacked}
Leo Breiman.
\newblock Stacked regressions.
\newblock \emph{Machine Learning}, 24\penalty0 (1):\penalty0 49--64, 1996.

\bibitem[Breiman(2001)]{breiman2001random}
Leo Breiman.
\newblock Random forests.
\newblock \emph{Machine Learning}, 45\penalty0 (1):\penalty0 5--32, 2001.

\bibitem[Brown(1982)]{brown1982multivariate}
P.~J. Brown.
\newblock Multivariate calibration.
\newblock \emph{Journal of the Royal Statistical Society: Series B},
  44\penalty0 (3):\penalty0 287--321, 1982.

\bibitem[Brown and Hwang(2012)]{brown2012introduction}
Robert Brown and Patrick Hwang.
\newblock \emph{Introduction to Random Signals and Applied {Kalman} Filtering}.
\newblock Wiley, 2012.
\newblock Fourth edition.

\bibitem[Evensen(1994)]{evensen1994sequential}
Geir Evensen.
\newblock Sequential data assimilation with nonlinear quasi-geostrophic model
  using {Monte Carlo} methods to forecast error statistics.
\newblock \emph{Journal of Geophysical Research}, 99\penalty0 (C5):\penalty0
  143--162, 1994.

\bibitem[Farrow(2016)]{farrow2016modeling}
David Farrow.
\newblock \emph{Modeling the Past, Present, and Future of Influenza}.
\newblock PhD thesis, Computational Biology Department, Carnegie Mellon
  University, 2016.

\bibitem[Friedman(2001)]{friedman2001greedy}
Jerome Friedman.
\newblock Greedy function approximation: a gradient boosting machine.
\newblock \emph{Annals of Statistics}, 29\penalty0 (5):\penalty0 1190--1232,
  2001.

\bibitem[Gordon et~al.(1993)Gordon, Salmond, and Smith]{gordon1993novel}
Neil~J. Gordon, David~J. Salmond, and Adrian F.~M. Smith.
\newblock Novel approach to nonlinear/{non-Gaussian Bayesian} state estimation.
\newblock \emph{IEE Proceedings F, Radar and Signal Processing}, 140\penalty0
  (2):\penalty0 107--113, 1993.

\bibitem[Hastie et~al.(2009)Hastie, Tibshirani, and
  Friedman]{hastie2009elements}
Trevor Hastie, Robert Tibshirani, and Jerome Friedman.
\newblock \emph{The Elements of Statistical Learning; Data Mining, Inference
  and Prediction}.
\newblock Springer, 2009.
\newblock Second edition.

\bibitem[Hastie et~al.(2015)Hastie, Tibshirani, and J.]{hastie2015statistical}
Trevor Hastie, Robert Tibshirani, and Martin~Wainwright J.
\newblock \emph{Statistical Learning with Sparsity: The Lasso and
  Generalizations}.
\newblock Chapman \& Hall, 2015.

\bibitem[Houtekamer and Mitchell(1998)]{houtekamer1998data}
P.~L. Houtekamer and Herschel~L. Mitchell.
\newblock Data assimilation using an ensemble {Kalman} filter technique.
\newblock \emph{Monthly Weather Review}, 126\penalty0 (3):\penalty0 796--811,
  1998.

\bibitem[Julier and Uhlmann(1997)]{julier1997new}
Simon~J. Julier and Jeffrey~K. Uhlmann.
\newblock A new extension of the {Kalman} filter to nonlinear systems.
\newblock \emph{Signal Processing, Sensor Fusion, and Target Recognition},
  1997.

\bibitem[Kalman(1960)]{kalman1960new}
Rudolf~E. Kalman.
\newblock A new approach to linear filtering and prediction problems.
\newblock \emph{Journal of Basic Engineering}, 82\penalty0 (1):\penalty0
  35--45, 1960.

\bibitem[Overshee and Moor(1996)]{vanovershee1996subspace}
Peter~Van Overshee and Bart~De Moor.
\newblock \emph{Subspace Identification for Linear Systems}.
\newblock Kluwer Academic, 1996.

\bibitem[Smith et~al.(1962)Smith, Schmidt, and McGee]{smith1962application}
Gerald~L. Smith, Stanley~F. Schmidt, and Leonard~A. McGee.
\newblock Application of statistical filter theory to the optimal estimation of
  position and velocity on board a circumlunar vehicle.
\newblock \emph{National Aeronautics and Space Administration Tech Report},
  1962.

\bibitem[Tibshirani(1996)]{tibshirani1996regression}
Robert Tibshirani.
\newblock Regression shrinkage and selection via the lasso.
\newblock \emph{Journal of the Royal Statistical Society: Series B},
  58\penalty0 (1):\penalty0 267--288, 1996.

\bibitem[Wolpert(1992)]{wolpert1992stacked}
David Wolpert.
\newblock Stacked generalization.
\newblock \emph{Neural Networks}, 5\penalty0 (2):\penalty0 241--259, 1992.

\end{thebibliography}
 
\appendix

\newpage
\section{Proofs and additional details}
\renewcommand\thefigure{A.\arabic{figure}}    
\renewcommand\thesection{A.\arabic{section}}
\renewcommand\theequation{A.\arabic{equation}}     
\renewcommand\thetheorem{A.\arabic{theorem}}     
\renewcommand\thelemma{A.\arabic{lemma}}     
\renewcommand\theremark{A.\arabic{remark}}    
\setcounter{section}{0}
\section{Proof of Theorem \ref{thm:kf_sf}}

We can write the sensor fusion update as 
\begin{align*}
\tilde{P}_{t+1} &= (\tilde{H}^T \tilde{R}_{t+1}^{-1} \tilde{H})^{-1} \\ 
\hx_{t+1} &= \tilde{P}_{t+1} \tilde{H}^T \tilde{R}_{t+1}^{-1} \tilde{z}_{t+1},
\end{align*}
where
$$
\tilde{P}_{t+1} = (H^T R^{-1} H + \bP_{t+1}^{-1})^{-1}.
$$
By the Woodbury matrix identity, $(A+UCV^{-1}) = A^{-1} - A^{-1} U
(C^{-1} + VA^{-1}U)^{-1} V A^{-1}$, with \smash{$A=\bP_{t+1}^{-1}$} in our 
case, we get
\begin{align}
\nonumber
\tilde{P}_{t+1} &= \bP_{t+1} - \bP_{t+1} H^T (R + H \bP_{t+1} H^T)^{-1} H 
\bP_{t+1} \\
\nonumber
&= (I - \bP_{t+1} H^T (R + H \bP_{t+1} H^T)^{-1} H) \bP_{t+1} \\
\label{eq:p_equality}
&= (I - K_{t+1} H) \bP_{t+1},
\end{align}
where recall, the Kalman gain $K_{t+1}$ is defined in \eqref{eq:k_update}.   

Now let us we rewrite the Kalman gain as
\begin{align*}
K_{t+1} &= \bP_{t+1} H^T (R + H \bP_{t+1} H^T)^{-1} \\
&= \bP_{t+1} H^T R^{-1}  (I + H \bP_{t+1} H^T R^{-1})^{-1},
\end{align*}
so that 
$$
K_{t+1} (I + H \bP_{t+1} H^T R^{-1}) = \bP_{t+1} H^T R^{-1}, 
$$
and after rearranging, 
\begin{equation}
\label{eq:k_equality}
K_{t+1} = (I - K_{t+1} H) \bP_{t+1} H^T R^{-1}.
\end{equation}

Putting \eqref{eq:p_equality} and \eqref{eq:k_equality} together, we get   
\begin{align*}
\tilde{P}_{t+1} \tilde{H}^T \tilde{R}_{t+1}^{-1} \tilde{z}_{t+1} 
&= (I - K_{t+1} H) \bP_{t+1} (H^T R^{-1} z_{t+1} + \bP_{t+1}^{-1}
\bx_{t+1}) \\  
&= (I - K_{t+1} H) \bP_{t+1} H^T R^{-1} z_{t+1} + (I - K_{t+1} H) \bx_{t+1} \\
&= K_{t+1} z_{t+1} + (I - K_{t+1} H) \bx_{t+1} \\
&= \bx_{t+1} + K_{t+1} (z_{t+1} - H \bx_{t+1}),
\end{align*}
which is exactly the Kalman filter prediction, completing the proof.

\section{Derivation of \eqref{eq:kf_sf_nonlinear}}

We first make the EKF estimate precise.  Let 
\begin{align}
\label{eq:f_jacobian}
F_{t+1} &= Df(\hx_t), \\
\label{eq:h_jacobian}
H_{t+1} &= Dh(\bx_{t+1}),
\end{align}
and define 
\begin{align}
\label{eq:ekf_predict}
\bx_{t+1} &= F_{t+1} \hx_t, \\
\label{eq:ekf_update}
\hx_{t+1} &= \bx_{t+1} + K_{t+1} \big(z_{t+1} - h(\bx_{t+1})\big), 
\end{align}
where $K_{t+1} \in \R^{k \times d}$ is defined via 
\begin{align}
\label{eq:ep_predict}
\bP_{t+1} &= F_{t+1} P_t F_{t+1}^T + Q, \\
\label{eq:ek_update}
K_{t+1} &= \bP_{t+1} H_{t+1}^T (H_{t+1} \bP_{t+1} H_{t+1}^T + R)^{-1}, \\ 
\label{eq:ep_update}
P_{t+1} &= (I-K_{t+1} H_{t+1}) \bP_{t+1},
\end{align}
Note that \eqref{eq:ep_predict}--\eqref{eq:ep_update} are exactly the same as 
\eqref{eq:p_predict}--\eqref{eq:p_update}, with $F_{t+1},H_{t+1}$ replacing
$F,H$, respectively.  Moreover, \eqref{eq:ekf_predict}, \eqref{eq:ekf_update}
are {\it nearly} the same as \eqref{eq:kf_predict}, \eqref{eq:kf_update}, with
again $F_{t+1},H_{t+1}$ replacing $F,H$, except that the residual in
\eqref{eq:ekf_update} is \smash{$z_{t+1} - h(\bx_{t+1})$}, and not
\smash{$z_{t+1} - H_{t+1}\bx_{t+1}$}, as would be analogous from
\eqref{eq:kf_update}.    

Next, we make what we called the extended SF (ESF) estimate precise. 
Let \smash{$\tilde{z}_{t+1} \in \R^{d+k}$} be as in
\eqref{eq:tz_nonlinear},  let \smash{$\tilde{H}_{t+1} \in \R^{(d+k) \times k}$}
be the rowwise concatentation of $H_{t+1}$ and $I \in \R^{k \times k}$, and 
\smash{$\tilde{R}_{t+1}$} be as in \eqref{eq:block_cov}.  Here,
\smash{$F_{t+1},H_{t+1},\bP_{t+1}$} are as defined in \eqref{eq:f_jacobian},
\eqref{eq:h_jacobian}, \eqref{eq:ep_predict}, respectively.  The ESF
estimate is 
\begin{equation}
\label{eq:esf}
\hx_{t+1} = (\tilde{H}^T \tilde{R}_{t+1}^{-1} \tilde{H})^{-1} \tilde{H}^T
\tilde{R}_{t+1}^{-1} \tilde{z}_{t+1}.
\end{equation}

To see that \eqref{eq:esf} and \eqref{eq:ekf_update} are equal, note that by 
following the proof of Theorem \ref{thm:kf_sf} directly, with $F_{t+1},H_{t+1}$
in place of $F,H$, we get
$$
(\tilde{H}_{t+1}^T \tilde{R}_{t+1}^{-1} \tilde{H}_{t+1})^{-1} \tilde{H}_{t+1}^T  
\tilde{R}_{t+1}^{-1} \tilde{z}_{t+1} = 
\bx_{t+1} + K_{t+1} (z_{t+1} - H_{t+1} \bx_{t+1}).
$$
Adding and subtracting \smash{$K_{t+1} h(\bx_{t+1})$} to the right-hand side
gives
\begin{align*}
(\tilde{H}_{t+1}^T &\tilde{R}_{t+1}^{-1} \tilde{H}_{t+1})^{-1} \tilde{H}_{t+1}^T  
\tilde{R}_{t+1}^{-1} (z_{t+1}, \bx_{t+1}) \\
&= \bx_{t+1} + K_{t+1} \big(z_{t+1} - h(\bx_{t+1})\big) + 
K_{t+1}(h(\bx_{t+1} - H_{t+1} \bx_{t+1}) \\
&= \bx_{t+1} + K_{t+1} \big(z_{t+1} - h(\bx_{t+1})\big) + 
(I - K_{t+1} H_{t+1}) \bP_{t+1} H_{t+1}^T R^{-1} (h(\bx_{t+1} - H_{t+1}
\bx_{t+1}) \\ 
&= \bx_{t+1} + K_{t+1} \big(z_{t+1} - h(\bx_{t+1})\big) + 
\tilde{P}_{t+1} H_{t+1}^T R^{-1} (h(\bx_{t+1} - H_{t+1} \bx_{t+1}),
\end{align*}
where in the second line we used \eqref{eq:k_equality}, and in the third we
used \eqref{eq:p_equality}.  Rearranging gives
$$
(\tilde{H}_{t+1}^T \tilde{R}_{t+1}^{-1} \tilde{H}_{t+1})^{-1} \tilde{H}_{t+1}^T  
\tilde{R}_{t+1}^{-1} \big(z_{t+1} + H_{t+1} \bx_{t+1} - h(\bx_{t+1}), \,
\bx_{t+1}\big) = \bx_{t+1} + K_{t+1} \big(z_{t+1} - h(\bx_{t+1})\big),
$$
which is precisely the desired conclusion, in \eqref{eq:kf_sf_nonlinear}.

\section{Proof of Theorem \ref{thm:sf_reg}}

Let us denote $X \in \R^{t \times k}$ and $Z \in \R^{t \times d}$ the matrices of
states and sensors, respectively, for the first $t$ time points.  That is, $X$  
has rows $x_i \in \R^k$, $i=1,\ldots,t$ and $Z$ has rows $z_i \in \R^d$,
$i=1,\ldots,t$. Fix any $j=1,\ldots,k$. Let \smash{$\hat{a}_j \in \R^d$} be
the $j$th column of \smash{$\hR_{t+1}^{-1} H (H^T \hR_{t+1}^{-1} H)^{-1}$}, and
let \smash{$\hb_j \in \R^d$} be the solution of \eqref{eq:con_reg},
equivalently, the solution of  
\begin{equation}
\label{eq:con_reg_mat}
\begin{alignedat}{2}
&\minimize_{b_j \in \R^d} && \|X_j - Z b_j\|_2^2 \\ 
&\st \quad && H^T b_j = e_j,
\end{alignedat}
\end{equation}
where $X_j$ denotes the $j$th column of $X$.  We will show that
\smash{$\hat{a}_j=\hb_j$}.  

The Lagrangian of problem \eqref{eq:con_reg_mat} is  
$$
L(b_j,u_j) = \|X_j - Z b_j\|_2^2 + u_j^T (H^T b_j - e_j),
$$
for a dual variable (Lagrange multiplier) $u_j \in \R^k$.  Taking the gradient 
of the Lagrangian and setting it equal to zero at an optimal pair
\smash{$(\hb_j,\hat{u}_j)$} gives 
$$
0 = Z^T(Z \hb_j - X_j) + H \hat{u}_j,
$$
and rearranging gives
\begin{equation}
\label{eq:stat_cond}
\hb_j = (Z^T Z)^{-1} (Z^T X_j - H \hat{u}_j).
\end{equation}
The dual solution \smash{$\hat{u}_j$} can be determined by plugging
\eqref{eq:stat_cond} into the equality constraint \smash{$H^T \hb_j=e_j$},
but for our purposes, the explicit dual solution is unimportant. 

We will now show that \smash{$\hat{b}_j = \hR_{t+1}^{-1} H \hat\beta_j$} for
some \smash{$\hat\beta_j \in \R^k$}.  Write
\begin{align*}
\hR_{t+1} &= \frac{1}{t} (Z - XH^T)^T (Z-XH^T) + (1-\alpha) I \\ 
&= \frac{1}{t} (Z^T Z - HX^T Z - Z^T XH^T + HX^T X H^T). 
\end{align*}
Then 
\begin{align*}
\hR_{t+1} \hb_j &= \frac{1}{t}(Z^T Z \hb_j - HX^T Z \hb_j - Z^T XH^T \hb_j +
HX^T X H^T \hb_j) \\
&= \frac{1}{t}(Z^T X_j - H\hat{u}_j - HX^T Z \hb_j - Z^T X_j + HX^T X_j) \\
&= H\underbrace{\Big(\frac{X^T X_j - \hat{u}_j - X^T Z
		\hb_j}{t}\Big)}_{\hat\beta_j}, 
\end{align*}
as desired, where in the second line we have used \eqref{eq:stat_cond} and the
constraint \smash{$H^T \hb_j = e_j$}. 

Observe that \smash{$\hat{a}_j = \hR_{t+1}^{-1} H \hat\alpha_j$} for some 
\smash{$\hat\alpha_j \in \R^k$}, in particular, for \smash{$\hat\alpha_j$} 
defined to be the $j$th column of \smash{$(H^T \hR_{t+1}^{-1} H)^{-1}$}.
Further,  
$$
e_j = H^T \hat{a}_j = H^T \hb_j
$$
the constraint on \smash{$\hat{a}_j$} holding by direct verification, and the
constraint on \smash{$\hb_j$} holding by construction in
\eqref{eq:con_reg_mat}.  That is,
$$
H^T \hR_{t+1}^{-1} H \hat\alpha_j = H^T \hR_{t+1}^{-1} H \hat\beta_j, 
$$
and since \smash{$H^T \hR_{t+1}^{-1} H$} is invertible, this leads to 
\smash{$\hat\alpha_j = \hat\beta_j$}, and finally \smash{$\hat{a}_j=\hb_j$},
completing the proof.  

\section{Further SF-regression equivalences} 

\subsection{More regularization: covariance shrinkage}

Covariance shrinkage---which broadly refers to the technique of adding a 
well-conditioned matrix to a covariance estimate to provide stability
and regularity---is widely used and well-studied in modern multivariate 
statistics, data mining, and machine learning.  As such, it would be natural to
replace the empirical covariance matrix estimate \eqref{eq:emp_cov} for the
measurement noise covariance by
\begin{equation}
\label{eq:shrink_cov}
\hR_{t+1} = \frac{\alpha}{t} \sum_{i=1}^t (z_i - Hx_i)(z_i - Hx_i)^T +
(1-\alpha) I, 
\end{equation}
for a parameter $\alpha \in [0,1]$.  For sensor fusion in the flu nowcasting
problem, this is considered (in some form) in \citet{farrow2016modeling}, and 
leads to significant improvements in nowcasting accuracy. 

Our next result shows that when we use shrinkage as in \eqref{eq:shrink_cov} to
estimate the measurement noise covariance in SF, this is equivalent to adding a
ridge penalty in the regression formulation. 

\begin{corollary}
	\label{cor:sf_ridge}
	Let \smash{$\hR_{t+1}$} be as in \eqref{eq:shrink_cov}, for some value 
	$\alpha \in [0,1]$. Consider the SF prediction at time $t+1$, with
	\smash{$\hR_{t+1}$} in place of $R$, denoted \smash{$\hx_{t+1} = \hB^T
		z_{t+1}$}.  Then each column of \smash{$\hB$}, denoted \smash{$\hb_j \in
		\R^d$}, $j=1,\ldots,k$, solves    
	\begin{alignat*}{2}
	&\minimize_{b_j \in \R^d} && \frac{1}{t} \sum_{i=1}^t (x_{ij} - b_j^T z_i)^2 + 
	\frac{(1-\alpha)}{\alpha} \| b_j \|_2^2 \\
	&\st \quad && H^T b_j = e_j.
	\end{alignat*}
\end{corollary}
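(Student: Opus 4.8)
The plan is to mirror the proof of Theorem~\ref{thm:sf_reg} almost verbatim, retaining the shrinkage term $(1-\alpha)I$ throughout and checking that it plays exactly the role of the ridge penalty. Assume $\alpha \in (0,1]$ (the value $\alpha = 0$ being read as the limit in which the infinite ridge penalty selects the minimum-norm solution of $H^T b_j = e_j$, which coincides with the SF coefficient $H(H^T H)^{-1}$ obtained when $\hR_{t+1} = I$). As in the earlier proof, let $X \in \R^{t \times k}$ and $Z \in \R^{t \times d}$ collect the past states and sensors, let $X_j$ be the $j$th column of $X$, fix $j \in \{1,\ldots,k\}$, let $\hat a_j \in \R^d$ be the $j$th column of $\hR_{t+1}^{-1} H (H^T \hR_{t+1}^{-1} H)^{-1}$ (so that $\hx_{t+1} = (\hat a_1^T z_{t+1}, \ldots, \hat a_k^T z_{t+1})$), and let $\hb_j$ solve the stated constrained ridge problem. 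Multiplying its objective by $t$, $\hb_j$ equivalently minimizes $\|X_j - Z b_j\|_2^2 + \frac{(1-\alpha)t}{\alpha}\|b_j\|_2^2$ subject to $H^T b_j = e_j$, and the goal is to show $\hat a_j = \hb_j$.

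The first step is the stationarity condition: the Lagrangian is $\|X_j - Z b_j\|_2^2 + \frac{(1-\alpha)t}{\alpha}\|b_j\|_2^2 + u_j^T(H^T b_j - e_j)$, and setting its gradient to zero at an optimal pair $(\hb_j, \hat u_j)$ gives $\big(Z^T Z + \frac{(1-\alpha)t}{\alpha} I\big)\hb_j = Z^T X_j - H\hat u_j$, which is \eqref{eq:stat_cond} with the Gram matrix $Z^T Z$ replaced by $Z^T Z + \frac{(1-\alpha)t}{\alpha} I$. The second step is to show $\hb_j = \hR_{t+1}^{-1} H \hat\beta_j$ for some $\hat\beta_j \in \R^k$: expanding $\hR_{t+1} = \frac{\alpha}{t}(Z^T Z - H X^T Z - Z^T X H^T + H X^T X H^T) + (1-\alpha)I$ and multiplying by $\hb_j$, the constraint $H^T \hb_j = e_j$ turns $Z^T X H^T \hb_j$ and $H X^T X H^T \hb_j$ into $Z^T X_j$ and $H X^T X_j$; then $\frac{\alpha}{t} Z^T Z \hb_j + (1-\alpha)\hb_j = \frac{\alpha}{t}\big(Z^T Z + \frac{(1-\alpha)t}{\alpha}I\big)\hb_j = \frac{\alpha}{t}(Z^T X_j - H\hat u_j)$ by the stationarity condition, the two copies of $\frac{\alpha}{t} Z^T X_j$ cancel, and what remains is $H$ times $\hat\beta_j := \frac{\alpha}{t}(X^T X_j - \hat u_j - X^T Z\hb_j) \in \R^k$. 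The only place the shrinkage parameter genuinely enters is the elementary identity $\frac{\alpha}{t}\cdot\frac{(1-\alpha)t}{\alpha} = 1-\alpha$, which is what makes the regression's ridge term and the $(1-\alpha)I$ in $\hR_{t+1}$ line up. The final step is identical to the end of the proof of Theorem~\ref{thm:sf_reg}: write $\hat a_j = \hR_{t+1}^{-1} H \hat\alpha_j$ with $\hat\alpha_j$ the $j$th column of $(H^T \hR_{t+1}^{-1} H)^{-1}$, note both $\hat a_j$ and $\hb_j$ satisfy $H^T(\cdot) = e_j$ (the former by direct verification, the latter by the constraint), deduce $H^T \hR_{t+1}^{-1} H \hat\alpha_j = H^T \hR_{t+1}^{-1} H \hat\beta_j$, and invoke invertibility of $H^T \hR_{t+1}^{-1} H$ to conclude $\hat\alpha_j = \hat\beta_j$, hence $\hat a_j = \hb_j$.

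I do not anticipate a genuine obstacle, since the corollary is really Theorem~\ref{thm:sf_reg} with the $(1-\alpha)I$ term kept rather than absorbed away. The one thing requiring care is the normalization bookkeeping: tracking the factor of $t$ between the $\frac{1}{t}$-averaged loss used in the corollary's statement and the plain sum of squares used in the manipulations, so that the penalty weight $\frac{1-\alpha}{\alpha}$ (averaged loss) becomes $\frac{(1-\alpha)t}{\alpha}$ (plain loss) and, after the multiplication by $\frac{\alpha}{t}$ coming from $\hR_{t+1}$, reproduces exactly $(1-\alpha)I$. It is also worth noting that $\hR_{t+1}$ and $H^T \hR_{t+1}^{-1} H$ are automatically invertible when $\alpha < 1$ (since $(1-\alpha)I$ is positive definite), and invertibility for $\alpha = 1$ is the hypothesis already present in Theorem~\ref{thm:sf_reg}.
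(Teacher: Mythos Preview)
Your argument is correct. You rerun the Lagrangian computation from the proof of Theorem~\ref{thm:sf_reg} with the extra $(1-\alpha)I$ term carried along, and the bookkeeping you flag (the factor of $t$ between the averaged loss and the plain sum, and the identity $\tfrac{\alpha}{t}\cdot\tfrac{(1-\alpha)t}{\alpha}=1-\alpha$) is exactly what makes the ridge term and the shrinkage term match. The edge cases you note ($\alpha=0$ and $\alpha=1$) are handled sensibly.

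The paper takes a different, shorter route: rather than redoing the stationarity calculation, it uses the standard ``ridge as OLS on augmented data'' trick. One writes
\[
\hR_{t+1} \;=\; \tfrac{1}{t}\,(\tilde Z - \tilde X H^T)^T(\tilde Z - \tilde X H^T),
\]
where $\tilde Z$ stacks $\sqrt{\alpha}\,Z$ on top of $\sqrt{(1-\alpha)t}\,I_d$ and $\tilde X$ stacks $\sqrt{\alpha}\,X$ on top of a zero block. This is again an empirical covariance, so Theorem~\ref{thm:sf_reg} applies directly to $(\tilde X,\tilde Z)$; expanding $\|\tilde X_j - \tilde Z b_j\|_2^2$ and dividing by $\alpha$ yields the ridge criterion. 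The paper's approach buys brevity and a clean reduction to the earlier theorem; yours is more self-contained and shows transparently, line by line, where the $(1-\alpha)I$ term enters the stationarity condition. Both are valid and neither requires any idea the other lacks.
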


\begin{proof}
	As before, let $X \in \R^{t \times k}$ and $Z \in \R^{t \times d}$ denote the
	matrix of states and sensors, respectively, over the first $t$ time points.  We
	can write \smash{$\hR_{t+1}$} in \eqref{eq:shrink_cov}
	$$
	\frac{\alpha}{t} (Z - XH^T)^T (Z-XH^T) + (1-\alpha) I = 
	\frac{1}{t} (\tilde{Z} - \tilde{X} H^T)^T (\tilde{Z} - \tilde{X} H^T),
	$$
	where \smash{$\tilde{Z} \in \R^{(t+d) \times d}$} is the rowwise concatenation
	of \smash{$\sqrt{\alpha/t} Z$} and \smash{$\sqrt{1-\alpha/t} I$}, and  
	\smash{$\tilde{X} \in \R^{(t+k) \times k}$} is the rowwise concatenation of
	\smash{$\sqrt{\alpha/t} X$} and $0 \in \R^{k \times k}$ (the matrix of all 0s).
	Applying Theorem \ref{thm:sf_reg} to \smash{$\tilde{X}, \tilde{Z}$}, expanding
	the criterion in the regression problem, and then multiplying the criterion by 
	$1/\alpha$, gives the result.
\end{proof}

\subsection{Less regularization: zero padding}

In the opposite direction, we now show that we can modify SF and obtain an 
equivalent regression formulation with less regularization, specifically,
without constraints. 

\begin{corollary}
	\label{cor:sf_uncon}
	At each $t=1,2,3,\ldots$, suppose we augment our measurement vector by 
	introducing $k$ measurements that are identically zero, denoted
	\smash{$\tilde{z}_t = (z_t, 0) \in \R^{d+k}$}.  Suppose that we augment our
	measurement map accordingly, defining \smash{$\tilde{H} \in \R^{(d+k) \times
			k}$} to be the rowwise concatention of $H$ and the identity $I \in
	\R^{k \times k}$. Consider running SF on this augmented system, using the
	empirical covariance to estimate $R$, and let \smash{$\hx_{t+1}=\hB^T z_{t+1}$} 
	denote the SF prediction at time $t+1$.  Then each column of \smash{$\hB$},
	denoted \smash{$\hb_j \in \R^d$}, $j=1,\ldots,k$, solves
	$$
	\minimize_{b_j \in \R^d} \; \sum_{i=1}^t (x_{ij} - b_j^T z_i)^2.
	$$
\end{corollary}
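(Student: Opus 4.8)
The plan is to obtain this as a direct corollary of Theorem \ref{thm:sf_reg}, applied to the zero-padded system in place of the original one. First I would record the ingredients of the augmented system: the measurement vector is \smash{$\tilde z_i=(z_i,0)\in\R^{d+k}$}, the augmented map \smash{$\tilde H$} is the rowwise concatenation of $H$ and $I\in\R^{k\times k}$, and the residual that feeds the empirical covariance is \smash{$\tilde z_i-\tilde H x_i=(z_i-Hx_i,\,-x_i)$}. Hence the empirical covariance in the augmented system is
\begin{equation*}
\tilde R_{t+1}=\frac{1}{t}\sum_{i=1}^t (z_i-Hx_i,\,-x_i)(z_i-Hx_i,\,-x_i)^T\in\R^{(d+k)\times(d+k)},
\end{equation*}
which I would assume invertible, along with \smash{$\tilde H^T\tilde R_{t+1}^{-1}\tilde H$}, exactly as the hypotheses of Theorem \ref{thm:sf_reg} require (this is what makes the augmented SF prediction well-defined in the first place).

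Next I would apply Theorem \ref{thm:sf_reg} verbatim to the triple \smash{$(\tilde z_i,\tilde H,\tilde R_{t+1})$}. It gives that the SF prediction in the augmented system is \smash{$\hx_{t+1}=\tilde B^T\tilde z_{t+1}$} with \smash{$\tilde B^T=(\tilde H^T\tilde R_{t+1}^{-1}\tilde H)^{-1}\tilde H^T\tilde R_{t+1}^{-1}$}, and that each column of \smash{$\tilde B\in\R^{(d+k)\times k}$}, written \smash{$\tilde b_j=(b_j,c_j)$} with $b_j\in\R^d$ and $c_j\in\R^k$, solves
\begin{equation*}
\minimize_{\tilde b_j\in\R^{d+k}}\;\sum_{i=1}^t (x_{ij}-\tilde b_j^T\tilde z_i)^2 \quad\st\quad \tilde H^T\tilde b_j=e_j.
\end{equation*}

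The final step is to unwind the block structure introduced by the padding. Since the last $k$ entries of \smash{$\tilde z_i$} vanish, \smash{$\tilde b_j^T\tilde z_i=b_j^Tz_i$}, so the objective above depends only on $b_j$; and the constraint \smash{$\tilde H^T\tilde b_j=H^Tb_j+c_j=e_j$} places no restriction on $b_j$ — it merely determines \smash{$c_j=e_j-H^Tb_j$} once $b_j$ is chosen. Thus minimizing over \smash{$\tilde b_j$} subject to the constraint is equivalent to minimizing \smash{$\sum_{i=1}^t(x_{ij}-b_j^Tz_i)^2$} freely over $b_j\in\R^d$ (and then reading off $c_j$). Likewise, the prediction involves only the top $d$ rows of \smash{$\tilde B$}: \smash{$\hx_{t+1}=\tilde B^T(z_{t+1},0)$} has $j$th coordinate $b_j^Tz_{t+1}$, so writing \smash{$\hB\in\R^{d\times k}$} for that top block (the \smash{$\hB$} of the statement), each of its columns $b_j$ solves the claimed unconstrained least-squares problem.

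I expect the only real subtlety to be the well-posedness bookkeeping: recognizing \smash{$\tilde R_{t+1}$} as the (scaled) Gram matrix of the vectors $(z_i-Hx_i,-x_i)$, and carrying over the invertibility assumptions of Theorem \ref{thm:sf_reg} to this larger system — which in particular forces the unconstrained regression to have a unique solution, so the reduction is unambiguous. Everything else is a mechanical consequence of the zero padding collapsing the equality constraint.
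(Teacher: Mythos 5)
Your proposal is correct and follows essentially the same route as the paper: apply Theorem \ref{thm:sf_reg} to the zero-padded system, observe that the padded coordinates contribute nothing to the criterion, and note that the constraint \smash{$H^T b_j + c_j = e_j$} only pins down the auxiliary block \smash{$c_j$}, so it can be dropped. Your extra bookkeeping about the augmented residuals \smash{$(z_i - Hx_i, -x_i)$} and the invertibility hypotheses is a welcome (if minor) addition beyond what the paper writes down.
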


\begin{proof}
	Applying Theorem \ref{thm:sf_reg} to the augmented system gives the equivalent
	regression problem
	\begin{alignat*}{2}
	&\minimize_{b_j \in \R^d, \, a_j \in \R^k} &&
	\sum_{i=1}^t (x_{ij} - b_j^T z_i - a_j^T 0)^2 \\
	&\st \quad && H^T b_j + a_j = e_j.
	\end{alignat*}
	The constraint is satisfied with $a_j=e_j-H^T b_j$. But $a_j$ has 
	no effect on the criterion, so the constraint can be removed.
\end{proof}

\begin{remark}
	The analogous equivalence holds for covariance shrinkage and ridge regression. 
	That is, in Corollary \ref{cor:sf_uncon}, if instead of the empirical
	covariance, we use $\alpha$ times the empirical covariance plus $(1-\alpha) I$, 
	then SF on the augmented system is equivalent to unconstrained ridge, at tuning
	parameter $(1-\alpha)/\alpha$. 
\end{remark}

\section{Example of process model selection}

Here we give a simple empirical example of process model selection using the 
regression formulation of SF.  We initialized $x_0=1$, and generated data 
according to 
\begin{align*}
x_t &= 0.5 x_{t-1} + 0.05 \sin(0.126 t) + \delta_t, \\ 
z_t &= H x_t + \epsilon_t,
\end{align*}
for $t=1,\ldots,200$.  Here $H \in \R^{4 \times 1}$ is simply the column vector
of all 1s, and the noise is drawn as $\delta_t \sim N(0,0.01)$, $\epsilon_t
\sim N(0,I)$, independently, over $t=1,\ldots,150$.  

The prediction setup is as follows.  At each time $t+1$, when making a
prediction of $x_{t+1}$, we observe all past states $x_i$, $i=1,\ldots,t$ and
all measurements $z_i$, $i=1,\ldots,t+1$.  We fit 5 different candidate process
models to past state data: 
\begin{enumerate}
	\item linear autoregression;
	\item quadratic autoregression;
	\item spline regression on time;
	\item sine regression on time;
	\item cosine regression on time.
\end{enumerate}
To be clear, models 1 and 2 regress $x_i$ on $x_{i-1}$ and $x_{i-1}^2$,
respectively, over $i=1,\ldots,t$.  Models 3--5 regress $x_i$ on a spline, sine,
and cosine transformation of $i$, respectively, over $i=1,\ldots,t$. 
The sine and cosine transformations are given the true frequency.   The spline  
is a cubic smoothing spline (with a knot at every data point) and its
tuning parameter is chosen by cross-validation (using only the past data).  
After being fit, we use each of the candidate process models 1--5 to make a
prediction of $x_{t+1}$, given $z_{t+1}$.  We take this as its ouput.

For $t=151,\ldots,200$, we define $\tilde{z}_t \in \R^9$ to be the measurement
vector $z_t \in \R^4$ augmented with the outputs of the 5 candidate process
models as described above (the burn-in period of 150 time points ensures that 
the candidate process models have enough training data to make reasonable 
predictions).  Figure \ref{fig:pm_select} shows the outputs from these models
over the last 50 time points.  

\begin{figure}[htb]
	\includegraphics[width=0.75\textwidth]{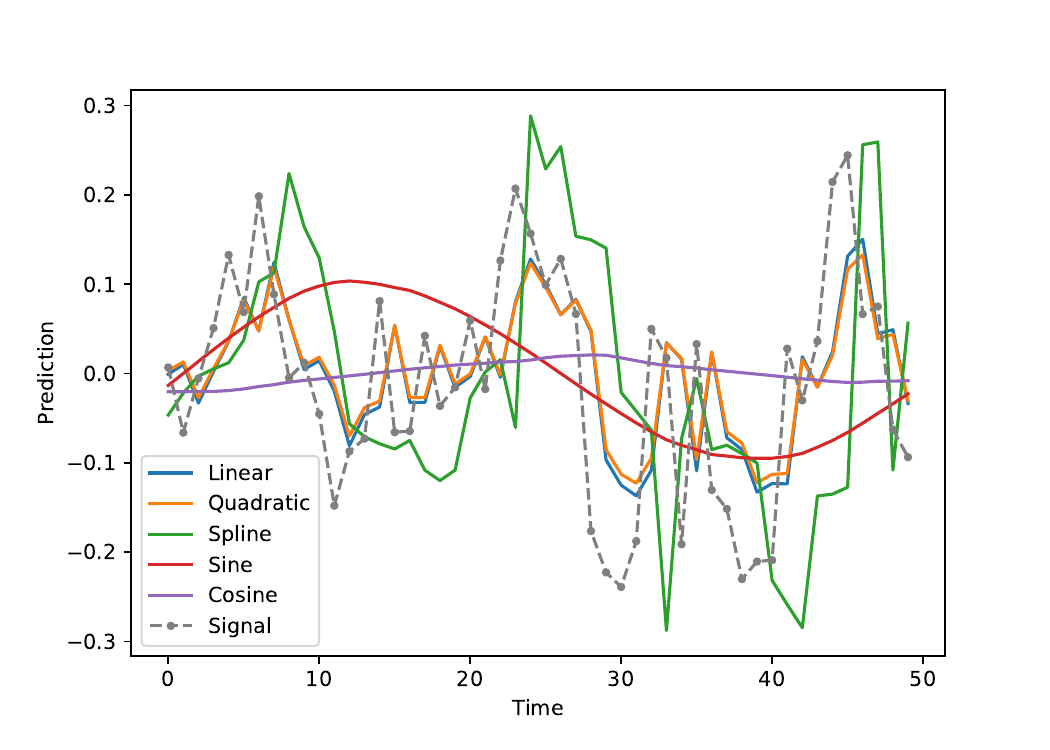}
	\caption{\it\small Simple process model selection example: outputs from 5
		candidate process models, over the last 50 time points.}
	\label{fig:pm_select}
\end{figure}

Finally, in the last 50 time points, to get an assimilated prediction of
\smash{$\hx_{t+1}$} at each time $t+1$, we solve the constrained  regression 
problem with a lasso penalty \eqref{eq:con_lasso}, using 
cross-validation to select $\lambda$ (again, using only past data).
Further, we penalize only the coefficients of the candidate process models
(not the pure measurements).  Table \ref{tab:pm_select} shows the median of
the coefficients over the last 50 time points (in this table, the
coefficients for the pure measurement sensors are aggregated as one).  We see
that the lasso tends to select the linear and sine sensors, as expected (because 
these two make up the true dynamical model), and places a small weight on the
spline sensor (which is flexible, and can mimic the contribution of the sine
sensor).   

\begin{table}[htb]
	\begin{tabular}{lllllll}
		& Linear & Quadratic & Spline & Sine & Cosine & Measurements \\
		\begin{tabular}[c]{@{}l@{}}Median \\ Coefficient\end{tabular} 
		& 0.643 & 0.000 & 0.094 & 0.189 & 0.000 & 0.0175
	\end{tabular}
	\caption{\it\small Simple process model selection example: median regression
		coefficients for the sensors, over the last 50 time points.} 
	\label{tab:pm_select}
\end{table}


\end{document}